\documentclass[review]{elsarticle}
\graphicspath{ {./figures/} }
\usepackage{hyperref}
\usepackage{xurl} 
\usepackage{float}
\usepackage{verbatim} 
\usepackage{apalike}
\restylefloat{figure}
\floatstyle{plain} 
\restylefloat{table}

\usepackage{xcolor}
\usepackage{amssymb}
\usepackage{amsthm}
\usepackage{amsmath}
\usepackage{booktabs}
\usepackage[T1]{fontenc}
\usepackage{lmodern}
\usepackage{multirow}
\usepackage{placeins}
\usepackage{longtable}

\newtheorem{theorem}{Theorem}
\newtheorem{definition}{Definition}
\newtheorem{example}{Example}

\newtheorem{proposition}{Proposition}

\journal{}

\biboptions{authoryear}

\begin{document}
\begin{frontmatter}

\title{A Class of Rényi-based Concentration Metrics for Kolmogorov–Arnold Geometry in Multilayer Perceptrons}

\author[label1]{Juan Carlos Fernández}
\ead{juancarlosfdezmlg@gmail.com}

\author[label2,label3]{Asunción Jiménez-Cordero\corref{cor1}}
\ead{asuncionjc@uma.es}

\author[label4]{José Francisco Paris}
\ead{jfparis@uma.es}

\cortext[cor1]{Corresponding author.}
\address[label1]{University of Málaga, Málaga, Spain}
\address[label2]{Department of Mathematical Analysis, Statistics and Operations Research and Applied Mathematics, Area of Statistics and Operations Research, University of Málaga, Málaga, Spain}
\address[label3]{OASYS research group, University of Málaga, Málaga, Spain}
\address[label4]{TELMA Institute, University of Málaga, Málaga, Spain}

\begin{abstract}
The interpretability of artificial intelligence models has become a central concern, especially for black-box models such as multilayer perceptrons (MLPs). Kolmogorov--Arnold Geometry has recently emerged as a framework for analysing the internal representations of these networks, by examining how the hidden layer locally responds to changes in the input through the minors of its Jacobian. The degree to which a few of these minors dominate the rest, known as their concentration, has so far been measured with the Participation Ratio. Although this metric has useful properties, its maximum value depends on the size of the network, so a given value cannot be read as high or low concentration without reference to the architecture. To overcome this limitation, in this paper we propose the $\alpha$-RC metrics, a family of Rényi-based concentration measures that compare the normalized magnitudes of the minors with the uniform distribution through the Rényi divergence of order $\alpha$. All the metrics in the family are bounded between 0 and 1 regardless of the architecture, with 0 indicating that the magnitude is evenly spread over the minors and 1 that it is carried by a single one. Moreover, the metrics are non-decreasing in $\alpha$, and for $\alpha = 2$ they reduce to an explicit transformation of the Participation Ratio. Computational experiments with single-hidden-layer MLPs trained on a synthetic dataset show the benefits of our proposal.
\end{abstract}

\begin{keyword}
Kolmogorov--Arnold geometry \sep R\'enyi divergence \sep Participation Ratio \sep Jacobian minors \sep multilayer perceptrons \sep neural network interpretability
\end{keyword}

\end{frontmatter}

\section{Introduction}

Understanding how neural networks internally represent information remains one of the central challenges in modern machine learning. While deep neural networks have achieved remarkable performance across a wide range of applications, their internal representations are still only partially understood. Developing mathematical tools capable of characterizing these representations is therefore an important step towards explaining why neural networks generalize successfully and under which conditions they exhibit efficient learning \citep{heuillet2021explainability}.

A recent line of research has revisited the celebrated Kolmogorov--Arnold representation theorem \citep{kolmogorov1957representation}, originally proved by Kolmogorov and later refined by Arnold, which states that every continuous multivariate function on the multidimensional unit cube can be represented as a finite superposition of continuous univariate functions. Although this theorem is mathematically elegant, its practical impact on machine learning remained limited for decades due to the highly irregular nature of the functions involved in the representation \citep{girosi1989representation, kuurkova1991kolmogorov}.

Interest in the theorem has recently been revived by the introduction of Kolmogorov--Arnold Networks (KANs) \citep{liu2025kan}, a novel neural network architecture inspired by the representation theorem. Beyond their architectural design, these works have stimulated a renewed interest in understanding the geometric structure underlying the Kolmogorov--Arnold representation itself.

In this context, Freedman introduced the notion of \emph{Kolmogorov--Arnold Geometry} (KAG), a geometric framework derived from the inner map appearing in the proof of the Kolmogorov--Arnold theorem \citep{freedman2024proof}. The central idea is that the local geometric properties of this map provide valuable information about how different variables interact within the representation. More importantly, the author hypothesized that similar geometric patterns may emerge spontaneously in neural network architectures that are not explicitly based on the Kolmogorov--Arnold construction, particularly in single-hidden-layer multilayer perceptrons (MLPs).


 To quantify these geometric properties, \citep{freedman2025spontaneous} proposed
several metrics based on the Jacobian of the inner map and its exterior powers,
whose entries are the $k \times k$ minors of the Jacobian. We say that these minors
are \emph{concentrated} when a few of them carry most of the total magnitude while
the remaining ones contribute little. For instance, the magnitudes of the matrix of minors $\left(\begin{smallmatrix} 9 & 0 \\ 0 & 1 \end{smallmatrix}\right)$ are
more concentrated than those of $\left(\begin{smallmatrix} 2.5 & 2.5 \\ 2.5 & 2.5 \end{smallmatrix}\right)$, although both add up to the same total.
Among the proposed metrics, the Participation Ratio (PR) has received particular
attention as a measure of this concentration, with lower values of the PR
indicating a higher concentration of the minors. Experimental evidence suggests that, under certain training regimes,
an increase in concentration with respect to initialization is associated with better
model performance. Interestingly, this relationship appears only in specific settings,
often referred to as the Goldilocks regime, while it disappears in others. Although the
reasons for this phenomenon are not yet fully understood, these observations suggest
that the PR captures relevant aspects of the Kolmogorov--Arnold Geometry emerging in
multilayer perceptrons.



Despite these promising results, the interpretation of the Participation Ratio
depends on the architecture under study. While its minimum value, attained when a
single minor carries all the magnitude, is the same for every network, its maximum
value, attained when the magnitude is uniformly spread over all minors, grows with
the number of hidden neurons. Consequently, the same PR value may lie close to the
maximum for a narrow network and far from it for a wider one, so its position within
the attainable range, and therefore its interpretation in terms of concentration,
changes with the architecture. \citep{freedman2025spontaneous} report the ratio
between the PR of the trained model and its value at initialization. This ratio
measures how concentration changes during training for a given network, but it does
not place the PR on a scale whose extremes have a fixed meaning.

The main objective of this paper is to address this limitation. We introduce a
family of normalized concentration metrics based on the R\'enyi entropy of order
$\alpha$, applied to the probability distribution obtained by normalizing the
magnitudes of the minors. The minimum and maximum values of these metrics are $0$
and $1$, regardless of the architecture, and so is their interpretation: a value of
$0$ corresponds, for every order $\alpha > 0$, to minimum concentration, with the magnitude uniformly spread over
all minors, and a value of $1$ to maximum concentration, with all the magnitude
carried by a single minor. The proposed metric is monotonically non-decreasing with
respect to $\alpha$, and the order $\alpha$ determines the weight given to the
different parts of the underlying distribution: small values of $\alpha$ take into account the proportion of non-zero entries, whereas large values emphasize its mode. In addition, for
$\alpha = 2$ the metric can be written as a function of the Participation Ratio,
establishing a direct theoretical connection between both quantities.

Finally, we illustrate the behaviour of the proposed metric through a series of
computational experiments on single-hidden-layer MLPs trained on a continuous
version of the XOR problem. We show that, unlike the PR, the proposed metric
places the concentration of the minors on a scale whose extremes have the same
meaning for every architecture, and we analyse how the choice of $\alpha$ affects
the resulting values.

\section{The Kolmogorov--Arnold Geometry}

This section reviews the main concepts underlying the notion of \emph{Kolmogorov--Arnold Geometry} (KAG), recently introduced by \citep{freedman2024proof} as a geometric framework inspired by the proof of the Kolmogorov--Arnold representation theorem. We first describe the geometric intuition behind this concept in Section~\ref{subs: from theorem to geometry} and then present the mathematical objects used to quantify it, namely the Jacobian and its exterior powers, in Section~\ref{subs: jacobians}. Section~\ref{subs: MLP and KAG} particularizes these objects to single-hidden-layer MLPs, the architecture studied in this work, by decomposing the network into an inner and an outer map and computing the minors of the Jacobian of its inner map. Finally, Section~\ref{subs: participation ratio} describes a particular metric proposed by \citep{freedman2025spontaneous}, called the Participation Ratio, which measures the concentration of these minors.

\subsection{From the Kolmogorov--Arnold theorem to Kolmogorov--Arnold Geometry} \label{subs: from theorem to geometry}


The Kolmogorov--Arnold representation theorem is one of the most celebrated results in approximation theory. Roughly speaking, it states that every continuous multivariate function on the multidimensional unit cube can be represented as a finite superposition of continuous univariate functions. While several variants and refinements of this theorem have been proposed in the literature, in this work we consider the formulation by \citep{morris2021hilbert}, which can be stated as follows:

\begin{theorem}[Kolmogorov--Arnold Theorem]
\label{thm: KA theorem}
Let $n \in \mathbb{N}$ with $n \geq 2$. There exist real numbers
$\lambda_1, \lambda_2, \ldots, \lambda_n$ and Lipschitz (and thus continuous)
functions $\varphi_j : [0,1] \to \mathbb{R}$,  $j = 1, \ldots, m$, with $m \geq 2n+1$, such that
for every continuous function $f : [0,1]^n \to \mathbb{R}$, there exists a continuous
function $g : \mathbb{R} \to \mathbb{R}$ satisfying, for each
$(x_1, x_2, \ldots, x_n) \in [0,1]^n$,
\begin{equation}\label{eq:KA}
f(x_1, x_2, \ldots, x_n)
=
\sum_{j=1}^{m}
g\!\left(
\lambda_1 \varphi_j(x_1)
+ \cdots +
\lambda_n \varphi_j(x_n)
\right).
\end{equation}
Furthermore, the constants $\lambda_i$ and the functions $\varphi_j$ can be chosen
independently of the function $f$.
\end{theorem}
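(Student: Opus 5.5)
I will follow the classical Kahane–Lorentz–Sprecher route, in the Lipschitz form given by Morris. The argument splits into two stages.

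\emph{Stage 1: a universal inner map.} Fix rationally independent positive constants $\lambda_1,\ldots,\lambda_n$ with $\sum_i \lambda_i \le 1$. Construct monotone Lipschitz functions $\varphi_1,\ldots,\varphi_m$, with $m=2n+1$, so that the inner map $\Phi_j(x)=\sum_i \lambda_i\varphi_j(x_i)$ has the following \emph{separation property}.

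For every scale $k$ there is a family of closed cubes of side $\approx \delta_k\to 0$, tiling $[0,1]^n$ up to small gaps, with two features:
\begin{itemize}
\item the gaps are shifted with $j$, so every point lies in a good cube for at least $n+1$ of the $2n+1$ indices $j$;
\item for each fixed $j$, the images $\Phi_j(\text{cube})$ of distinct good cubes are pairwise disjoint intervals of $\mathbb{R}$.
\end{itemize}

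The $\varphi_j$ are built as uniform limits of piecewise linear increasing functions. At stage $k$ each is constant on the intervals of the grid and has small steep pieces on the gaps. The steps are chosen generically, and rational independence of the $\lambda_i$ keeps the sums $\sum_i\lambda_i\varphi_j(a_i)$ of distinct grid points distinct. To get Lipschitz rather than merely continuous functions, I follow Morris's trick: work with a reparametrised variable, or control the total variation added at each stage by a geometric series.

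\emph{Stage 2: the outer function.} Given a continuous $f$, I build $g$ as a series $g=\sum_r g_r$ using the standard contraction argument. Let $\|f\|=M$. On the image interval of each good cube $Q$ for index $j$, set $g_r$ equal to $\frac{1}{n+1}f(c_Q)$, where $c_Q$ is the centre of $Q$, and interpolate linearly on the gaps, keeping $\|g_r\|\le M/(n+1)$.

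By uniform continuity, choosing the scale fine enough gives
\[
\Bigl|f-\sum_j g_r\circ\Phi_j\Bigr| \le \Bigl(\tfrac{n}{n+1}+\varepsilon\Bigr)M=\theta M, \qquad \theta<1.
\]
The $n+1$ good indices contribute about $f$, and the at most $n$ bad ones contribute at most $nM/(n+1)$.

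Iterate on the residual $f_r$. The resulting geometric series for $g$ converges uniformly, so $g$ is continuous and \eqref{eq:KA} holds exactly. The $\lambda_i$ and $\varphi_j$ come from Stage 1 and do not depend on $f$.

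For $m>2n+1$, set the extra $\varphi_j$ equal to copies whose contributions we cancel. A cleaner option is to rerun the argument with $m$ shifts, since the counting only needs more than half the indices good.

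The main obstacle is Stage 1. The disjointness of images must hold simultaneously at all scales, and the $\varphi_j$ must be kept Lipschitz. I would first attempt the construction with $\varphi_j(x)=\varphi(x+j\epsilon)$, a single function shifted, and verify disjointness by induction on $k$, perturbing the plateau values.
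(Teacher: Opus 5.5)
The paper gives no proof of this theorem; it only refers to Morris. Your outline is therefore measured against the classical Kahane--Lorentz argument behind that reference. The two-stage architecture and the contraction in Stage~2 are the right ones, with one small correction. A typical point is good for more than $n+1$ indices, possibly for all $2n+1$. With $r\ge n+1$ good indices the error is at most $\bigl(\frac{r-n-1}{n+1}+\frac{2n+1-r}{n+1}+\varepsilon\bigr)M=\theta M$, so your bound holds, but not because the good indices ``contribute about $f$''.

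\textbf{First gap: disjointness across indices.} A single outer function $g$ serves every index. So $g_r$ is well defined only if the intervals $\Phi_j(Q)$ are pairwise disjoint over all pairs $(j,Q)$, not merely for each fixed $j$. Disjointness for fixed $j$ is what Kolmogorov's original version, with $j$-dependent outer functions, requires. Your ansatz $\varphi_j(x)=\varphi(x+j\epsilon)$ violates the stronger condition outright. With $e=(1,\ldots,1)$ one gets $\Phi_{j'}\bigl(Q+(j-j')\epsilon e\bigr)=\Phi_j(Q)$, and the good cubes of index $j'$ are translates of those of index $j$. So the same interval would have to carry both $f(c_Q)/(n+1)$ and $f\bigl(c_Q+(j-j')\epsilon e\bigr)/(n+1)$, and this inconsistency does not shrink with the scale. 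You need one of two fixes. Either draw the plateau values of different $\varphi_j$ from disjoint sets of rationals, so that rational independence of the $\lambda_i$ also separates the sums across different $j$. Or add constants $c_j$ to the $\varphi_j$ that push the ranges of the $\Phi_j$ apart.

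\textbf{Second gap: the Lipschitz property.} The construction you describe cannot deliver it, however the increments are budgeted; a geometric series of increments gives convergence, not bounded slopes. Fix $j$. The $N_k^{\,n}$ product cubes of rank $k$ have pairwise disjoint images inside an interval of length $(\sum_i\lambda_i)\bigl(\varphi_j(1)-\varphi_j(0)\bigr)$. Summing $|\Phi_j(Q)|=\sum_i\lambda_i|\varphi_j(I_i)|$ over all cubes, where $I_1,\ldots,I_n$ are the sides of $Q$, shows that the total increase of $\varphi_j$ on the rank-$k$ intervals is at most $\bigl(\varphi_j(1)-\varphi_j(0)\bigr)/N_k^{\,n-1}\to 0$. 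So $d\varphi_j$ is carried, up to $o(1)$, by the rank-$k$ gap set $G_k$ of index $j$, at every rank.

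If $\varphi_j$ is absolutely continuous, in particular Lipschitz, then $\bigcap_{i\ge i_0}G_{k_i}$ has positive Lebesgue measure for some subsequence of ranks $k_i$ and large $i_0$. The Lebesgue density theorem excludes this when, as in your near-uniform tiling, the rank-$k$ intervals fill a fixed fraction of every window much longer than $\delta_k$. Hence the inner functions of such a construction are necessarily singular, and a direct Lipschitz construction with shifted copies will not close.

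Your other suggestion, the change of variable, is the correct fix and should become the actual argument:
\begin{itemize}
\item First build continuous increasing $\varphi_j$.
\item Let $c=1+\sum_j\bigl(\varphi_j(1)-\varphi_j(0)\bigr)$ and $s(x)=\bigl(x+\sum_j(\varphi_j(x)-\varphi_j(0))\bigr)/c$, which is a homeomorphism of $[0,1]$.
\item Apply the continuous version to $(t_1,\ldots,t_n)\mapsto f\bigl(s(t_1),\ldots,s(t_n)\bigr)$ and substitute $t_i=s^{-1}(x_i)$.
\end{itemize}
For $y<y'$ one has $\varphi_j(s^{-1}(y'))-\varphi_j(s^{-1}(y))\le c\,(y'-y)$. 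So the new inner functions $\varphi_j\circ s^{-1}$ are $c$-Lipschitz, and they still do not depend on $f$.
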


We omit the formal proof of Theorem~\ref{thm: KA theorem} here. However, the interested reader is referred to \citep{morris2021hilbert} for a detailed proof and discussion.

Theorem~\ref{thm: KA theorem} can be read as a decomposition of $f$ into two maps.
The inner map $\Phi^{\mathrm{KA}} : [0,1]^n \to \mathbb{R}^m$ is defined
componentwise by
\begin{equation} \label{eq: inner map KA}
\Phi^{\mathrm{KA}}_j(x) = \sum_{i=1}^{n} \lambda_i\, \varphi_j(x_i),
\qquad j = 1, \dots, m,
\end{equation}
and the outer map $g^{\mathrm{KA}} : \mathbb{R}^m \to \mathbb{R}$ by
\begin{equation} \label{eq: outer map KA}
g^{\mathrm{KA}}(h) = \sum_{j=1}^{m} g(h_j),
\qquad h = (h_1, \dots, h_m)^T \in \mathbb{R}^m.
\end{equation}
With this notation, we define the KA representation
$f^{\mathrm{KA}} := g^{\mathrm{KA}} \circ \Phi^{\mathrm{KA}}$, and
Theorem~\ref{thm: KA theorem} states that $f^{\mathrm{KA}} = f$ on $[0,1]^n$. Since
the constants $\lambda_i$ and the functions $\varphi_j$ do not depend on $f$, the
inner map $\Phi^{\mathrm{KA}}$ is universal: the same map serves for every
continuous function, and only the outer map, through $g$, depends on the target
function.

Regarding its regularity, each component $\Phi^{\mathrm{KA}}_j$ is a finite linear
combination of Lipschitz functions, and hence Lipschitz. Therefore
$\Phi^{\mathrm{KA}}$ is a Lipschitz map, and by Rademacher's theorem it is
differentiable almost everywhere in $[0,1]^n$ with respect to the Lebesgue measure.
Its Jacobian matrix, and hence all the quantities derived from it in
Section~\ref{subs: jacobians}, are therefore defined almost everywhere, but not
necessarily at every point. Throughout this work, the Jacobian of
$\Phi^{\mathrm{KA}}$ is considered only at points where $\Phi^{\mathrm{KA}}$ is
differentiable.

Beyond its theoretical guarantee of universal approximation,
Theorem~\ref{thm: KA theorem} admits a compelling geometric interpretation through
the structure of the inner map $\Phi^{\mathrm{KA}}$. The recent work of
\citep{freedman2025spontaneous} characterizes the local geometric properties of
this map, a framework referred to as \emph{Kolmogorov--Arnold geometry (KAG)}. It does so
through differential quantities, mainly the Jacobian matrix and its exterior powers
(Section~\ref{subs: jacobians}), which provide information about how the hidden
representation varies with respect to the input variables.

\subsection{Jacobians, exterior powers and minor concentration} \label{subs: jacobians}

Consider the inner map $\Phi^{\mathrm{KA}} : [0,1]^n \to \mathbb{R}^m$ defined in
Equation~\eqref{eq: inner map KA}, where $n$ denotes the input dimension and $m$
the number of hidden coordinates. Given a point $\tilde{x} \in [0,1]^n$ at which
$\Phi^{\mathrm{KA}}$ is differentiable (see Section~\ref{subs: from theorem to geometry}), the
Jacobian matrix $J_{\Phi^{\mathrm{KA}}}(\tilde{x}) \in \mathcal{M}_{m \times n}(\mathbb{R})$
describes the local sensitivity of each hidden coordinate with respect to the input
variables at $\tilde{x}$. In other words, each row of the Jacobian characterizes
how the corresponding hidden coordinate locally reacts to infinitesimal
perturbations of the input. Specifically, the $(j, i)$-th entry of the Jacobian
matrix is the partial derivative of the $j$-th component function with respect to
the $i$-th input variable evaluated at $\tilde{x}$, namely
\begin{equation}
[J_{\Phi^{\mathrm{KA}}}(\tilde{x})]_{ji} \doteq
\frac{\partial \Phi^{\mathrm{KA}}_j(\tilde{x})}{\partial x_i},
\qquad j = 1, \dots, m, \quad i = 1, \dots, n.
\end{equation}

An important consequence of the proof of the Kolmogorov--Arnold theorem (Theorem~\ref{thm: KA theorem}) is that, in the ideal construction, many rows of the Jacobian vanish. Geometrically, this means that several hidden neurons become locally insensitive to perturbations of the input, producing a highly structured differential representation.

However, analysing individual neurons is not sufficient to characterize the complete geometry. It is also necessary to study how groups of hidden neurons jointly respond to variations of the input. To this end, \citep{freedman2025spontaneous} considered the exterior powers of the Jacobian matrix. The entries of the $k$-th exterior power correspond to the minors of order $k$, which quantify the joint variation of subsets of hidden neurons with respect to subsets of input variables.

To formalize this notion, we introduce the definition of the $k$-th exterior power of a matrix, which collects all possible $k \times k$ minors into a single operator.

\begin{definition} \label{def: exterior_power}
Let $A \in \mathcal{M}_{m \times n}(\mathbb{R})$ and let $1 \leq k \leq \min(m, n)$. Define the ordered index sets of cardinality $k$ as
\begin{equation}
    \mathcal{J}_{k, m} = \left\{ J = (j_1, j_2, \ldots, j_k) \in \{1, \ldots, m\}^k : 1 \leq j_1 < j_2 < \cdots < j_k \leq m \right\},
\end{equation}
and similarly $\mathcal{I}_{k, n}$ for subsets of $\{1, \ldots, n\}$ of size $k$. For any pair of index sets $J \in \mathcal{J}_{k, m}$ and $I \in \mathcal{I}_{k, n}$, let $A_{J, I} \in \mathcal{M}_{k \times k}(\mathbb{R})$ denote the submatrix formed by selecting the rows indexed by $J$ and the columns indexed by $I$.

The $k$-th exterior power of $A$, denoted by $P_A^k \in \mathcal{M}_{\binom{m}{k} \times \binom{n}{k}}(\mathbb{R})$, is the matrix whose entry corresponding to the row index $J \in \mathcal{J}_{k, m}$ and column index $I \in \mathcal{I}_{k, n}$ is given by the determinant of $A_{J, I}$:
\begin{equation}
    \left[ P_A^k \right]_{J, I} = \det(A_{J, I}) = \det \begin{pmatrix}
    A_{j_1, i_1} & A_{j_1, i_2} & \cdots & A_{j_1, i_k} \\
    A_{j_2, i_1} & A_{j_2, i_2} & \cdots & A_{j_2, i_k} \\
    \vdots & \vdots & \ddots & \vdots \\
    A_{j_k, i_1} & A_{j_k, i_2} & \cdots & A_{j_k, i_k}
    \end{pmatrix}
\end{equation}
\end{definition}




Applying Definition~\ref{def: exterior_power} to the Jacobian matrix $J_\Phi(x)$
yields its $k$-th exterior power $P^k_{J_\Phi(x)}$, whose entries admit a natural
geometric interpretation in terms of volume changes. For a linear map
$T : \mathbb{R}^n \to \mathbb{R}^n$ given by $T(x) = Ax$, with
$A \in \mathcal{M}_{n \times n}(\mathbb{R})$, an infinitesimal volume element $dV$
is mapped to an element of volume $|\det(A)|\,dV$. For a nonlinear map, the same
holds locally around a point $\tilde{x}$, with $A$ replaced by the Jacobian matrix at
$\tilde{x}$. This is the Jacobian determinant that appears in the change of variables
formula for multiple integrals.

Minors extend this idea to rectangular Jacobians and to subsets of variables. The
minor $\left[P^k_{J_\Phi(x)}\right]_{J, I}$ is the factor by which the oriented
$k$-dimensional volume of an infinitesimal parallelepiped spanned by the input
directions $I = (i_1, \ldots, i_k)$ changes when it is mapped by $J_\Phi(x)$ and
projected onto the hidden coordinates $J = (j_1, \ldots, j_k)$. Its absolute value
measures the magnitude of the volume change, while its sign indicates orientation.
For $k = 1$, the minors reduce to the entries of the Jacobian, and if $m = n$ and
$k = n$, the only minor is the full Jacobian determinant, a case that does not
arise in this work, where $m > n$. For instance, for a map with two inputs and
three hidden coordinates, the three minors of order two are the oriented areas of
the projections of the image of an infinitesimal square onto the planes spanned by
the hidden coordinates $(1, 2)$, $(1, 3)$ and $(2, 3)$.

The distribution of these minors plays a fundamental role in the characterization of the Kolmogorov--Arnold Geometry. When only a relatively small number of minors concentrates most of the total magnitude, the geometry is said to exhibit a high degree of concentration. Conversely, if the magnitude is distributed more uniformly among the minors, the corresponding geometric structure is considered less pronounced.

\subsection{MLP and its Kolmogorov--Arnold Geometry} \label{subs: MLP and KAG}

A fundamental question raised by \citep{freedman2025spontaneous} is whether
Kolmogorov--Arnold geometry emerges spontaneously during the standard training of
conventional neural network architectures. Unlike the construction underlying
Theorem~\ref{thm: KA theorem}, where the inner functions are explicitly designed,
standard multilayer perceptrons (MLPs) do not impose any KA structure
\textit{a priori}. In this context, the emergence of KA geometry does not refer to
the appearance of exact zeros. It refers to the development, during training, of
patterns similar to those of the KA construction: Jacobian rows that are
approximately zero, and a distribution of minors in which a few of them carry most
of the total magnitude. These patterns are assessed by comparing the trained model
with its initialization.

To evaluate whether a trained MLP exhibits a geometric structure compatible with
the KA framework, the network is first decomposed into an inner map and an outer
map. In the KA representation (Theorem~\ref{thm: KA theorem}), the inner map embeds
the input domain into a higher-dimensional space, where a relatively simple outer
map reconstructs the target function.

We consider a single-hidden-layer MLP (shallow MLP) with $n$ inputs, $m$ hidden
neurons and a single linear output neuron. Let
$x = (x_1, \dots, x_n)^T \in \mathbb{R}^n$ denote the input vector. The network is
parameterized by:
\begin{itemize}
    \item an input weight matrix $A \in \mathcal{M}_{m \times n}(\mathbb{R})$ and a
    bias vector $a \in \mathbb{R}^m$;
    \item an output weight vector $B \in \mathcal{M}_{1 \times m}(\mathbb{R})$ and a
    scalar bias $b \in \mathbb{R}$;
    \item a smooth activation function $\sigma : \mathbb{R} \to \mathbb{R}$. In this
    work we use the GeLU activation, $\sigma(t) = t\,F(t)$, where $F$ denotes the
    cumulative distribution function of the standard normal distribution. We
    denote by $\boldsymbol{\sigma} : \mathbb{R}^m \to \mathbb{R}^m$ its
    componentwise extension,
    $\boldsymbol{\sigma}(z) = (\sigma(z_1), \dots, \sigma(z_m))^T$ for
    $z = (z_1, \dots, z_m)^T \in \mathbb{R}^m$.
\end{itemize}
The input--output mapping $f^{\mathrm{MLP}} : \mathbb{R}^n \to \mathbb{R}$ is then
given by
\begin{equation}
f^{\mathrm{MLP}}(x) = B\,\boldsymbol{\sigma}(Ax + a) + b.
\end{equation}

Following the KA framework, we decompose $f^{\mathrm{MLP}}$ into two stages:
\begin{enumerate}
    \item \textbf{Inner map} $\Phi^{\mathrm{MLP}}$:
    \begin{equation} \label{eq: inner map MLP}
    \Phi^{\mathrm{MLP}} : \mathbb{R}^n \longrightarrow \mathbb{R}^m, \quad
    \Phi^{\mathrm{MLP}}(x) = \boldsymbol{\sigma}(z(x)),
    \end{equation}
    where $z(x) = Ax + a \in \mathbb{R}^m$ is the pre-activation vector.
    \item \textbf{Outer map} $g^{\mathrm{MLP}}$: the affine map acting on the hidden
    representation $h = \Phi^{\mathrm{MLP}}(x)$,
    \begin{equation} \label{eq: outer map MLP}
    g^{\mathrm{MLP}} : \mathbb{R}^m \longrightarrow \mathbb{R}, \quad
    g^{\mathrm{MLP}}(h) = B\,h + b.
    \end{equation}
\end{enumerate}
With this notation,
$f^{\mathrm{MLP}} = g^{\mathrm{MLP}} \circ \Phi^{\mathrm{MLP}}$, in direct analogy
with the decomposition $f^{\mathrm{KA}} = g^{\mathrm{KA}} \circ \Phi^{\mathrm{KA}}$
of Section~\ref{subs: from theorem to geometry}.

Since GeLU is smooth, $\Phi^{\mathrm{MLP}}$ is differentiable at every point. Its
Jacobian is therefore defined everywhere, unlike that of $\Phi^{\mathrm{KA}}$,
which, as discussed in Section~\ref{subs: from theorem to geometry}, is only guaranteed to exist
almost everywhere. For non-differentiable activations such as ReLU, the Jacobian
may not exist at points where some pre-activation crosses zero, and a convention
would be required to compute the metrics. Using the tools of
Section~\ref{subs: jacobians}, the entries of the Jacobian matrix
$J_{\Phi^{\mathrm{MLP}}}(x) \in \mathcal{M}_{m \times n}(\mathbb{R})$ at a point $x$ are:
\begin{equation}
[J_{\Phi^{\mathrm{MLP}}}(x)]_{ji} \doteq
\frac{\partial \Phi^{\mathrm{MLP}}_j(x)}{\partial x_i}
= \sigma'(z_j(x))\,A_{ji}, \qquad j = 1, \dots, m, \quad i = 1, \dots, n,
\end{equation}
where $z_j(x) = \sum_{i=1}^{n} A_{ji}\,x_i + a_j$ is the pre-activation of the
$j$-th hidden neuron and $\sigma'$ is the derivative of the scalar activation
function. For GeLU,
\begin{equation}
\sigma'(t) = F(t) + t \, \frac{1}{\sqrt{2\pi}}\, e^{-t^2/2}.
\end{equation}
Equivalently, $J_{\Phi^{\mathrm{MLP}}}(x) = D(x)\,A$, where
$D(x) = \operatorname{diag}\big(\sigma'(z_1(x)), \dots, \sigma'(z_m(x))\big)$.

In an MLP, row $j$ of $J_{\Phi^{\mathrm{MLP}}}(x)$ is approximately zero for one of
two reasons. Either the weight vector $A_{j\cdot}$ is close to zero (for instance,
due to weight sparsity or dead units), or $\sigma'(z_j(x))$ is close to zero at
$x$, as happens with GeLU for sufficiently negative pre-activations. The first
mechanism affects the row at every input, whereas the second depends on $x$. It is
this input-dependent mechanism that resembles the local insensitivity of hidden
coordinates in the KA construction.

Finally, for $1 \le k \le \min(m, n)$, the entries of the $k$-th exterior power of
the MLP Jacobian, $P^k_{J_{\Phi^{\mathrm{MLP}}}(x)} \in
\mathcal{M}_{\binom{m}{k} \times \binom{n}{k}}(\mathbb{R})$
(Definition~\ref{def: exterior_power}), are

\begin{equation} \label{eq: minors MLP}
\begin{split}
\left[ P^k_{J_{\Phi^{\mathrm{MLP}}}(x)} \right]_{J, I}
&= \det \begin{pmatrix}
\sigma'(z_{j_1}(x))\, A_{j_1, i_1} & \cdots & \sigma'(z_{j_1}(x))\, A_{j_1, i_k} \\
\vdots & \ddots & \vdots \\
\sigma'(z_{j_k}(x))\, A_{j_k, i_1} & \cdots & \sigma'(z_{j_k}(x))\, A_{j_k, i_k}
\end{pmatrix} \\
&= \left( \prod_{l=1}^{k} \sigma'(z_{j_l}(x)) \right) \det(A_{J, I}),
\end{split}
\end{equation}

\noindent where $J = (j_1, \dots, j_k) \in \mathcal{J}_{k, m}$ and
$I = (i_1, \dots, i_k) \in \mathcal{I}_{k, n}$. The second equality follows from
the multilinearity of the determinant with respect to its rows, and shows that a
hidden neuron with $\sigma'(z_j(x)) \approx 0$ makes every minor that involves row
$j$ approximately zero. These exterior powers quantify how the inner map of the MLP
locally transforms $k$-dimensional volume elements, providing the basis for
measuring minor concentration in neural networks. For simplicity, throughout the
rest of this work we denote $P^k_{J_{\Phi^{\mathrm{MLP}}}(x)}$ by $J^{(k)}(x)$.

\subsection{Participation Ratio} \label{subs: participation ratio}

Building on the exterior powers introduced in Section~\ref{subs: MLP and KAG}, we now
quantify the concentration of the minors of the inner map $\Phi^{\mathrm{MLP}}$ defined in
Equation~\eqref{eq: inner map MLP}. To this end, \citep{freedman2025spontaneous}
introduced the Participation Ratio (PR), one of the main quantitative tools for
detecting the spontaneous emergence of KA geometry in multilayer perceptrons. The PR
measures whether the total magnitude of the minors in $J^{(k)}(x)$ is concentrated
in a few of them or spread across many.

\begin{definition}[Participation Ratio] \label{def: PR}
Let $x \in \mathbb{R}^n$ be an input point such that $J^{(k)}(x) \neq 0$. The
order-$k$ Participation Ratio at $x$ is defined as
\begin{equation} \label{eq: PR}
\mathrm{PR}^{(k)}(x) := \frac{\|J^{(k)}(x)\|_1}{\|J^{(k)}(x)\|_2},
\end{equation}
where
\begin{equation*}
\|J^{(k)}(x)\|_1 = \sum_{J \in \mathcal{J}_{k,m}} \sum_{I \in \mathcal{I}_{k,n}}
\left| J^{(k)}_{J,I}(x) \right|
\end{equation*}
is the entrywise $\ell_1$ norm and
\begin{equation*}
\|J^{(k)}(x)\|_2 = \Bigg( \sum_{J \in \mathcal{J}_{k,m}} \sum_{I \in \mathcal{I}_{k,n}}
\left| J^{(k)}_{J,I}(x) \right|^2 \Bigg)^{1/2}
\end{equation*}
is the entrywise $\ell_2$ (Frobenius) norm. For an evaluation dataset
$\mathcal{D} = \{x^{(1)}, \dots, x^{(S)}\}$, the global Participation Ratio is
defined as the mean over all points of the dataset,
\begin{equation} \label{eq: global PR}
\mathrm{PR}^{(k)} := \frac{1}{S} \sum_{s=1}^{S} \mathrm{PR}^{(k)}(x^{(s)}).
\end{equation}
\end{definition}


Let $N_k = \binom{m}{k}\binom{n}{k}$ denote the total number of minors of order $k$.
Viewing $J^{(k)}(x)$ as a vector $v$ with $N_k$ entries, the inequalities
$\|v\|_2 \leq \|v\|_1 \leq \sqrt{N_k}\, \|v\|_2$ hold, the second one being a
consequence of the Cauchy--Schwarz inequality. Hence
\begin{equation} \label{eq: PR bounds}
1 \leq \mathrm{PR}^{(k)}(x) \leq \sqrt{N_k}.
\end{equation}
The lower bound is attained if and only if exactly one minor is non-zero, and the
upper bound if and only if all minors have the same absolute value. Note that,
while the lower bound is the same for every architecture, the upper bound depends
on $N_k$, and hence on the number of hidden neurons $m$. We return to this point in
Section~\ref{sec: novel metric}.

The value of $\mathrm{PR}^{(k)}(x)$ can therefore be interpreted as follows:
\begin{itemize}
    \item \textbf{\boldmath Values close to $1$} indicate high concentration: a few minors
    carry most of the total magnitude, and $\mathrm{PR}^{(k)}(x) = 1$ when a single
    minor carries all of it. This is the kind of pattern expected from the KA
    construction.
    \item \textbf{\boldmath Values close to $\sqrt{N_k}$} indicate low concentration: the
    magnitude is spread almost uniformly across the minors, as expected in the
    absence of KA structure.
\end{itemize}

Finally, observe that if $J^{(k)}(x) = 0$, both norms vanish and Equation~\eqref{eq: PR} is not defined. This can happen even when the Jacobian itself is non-zero, for instance when the rank of $J_{\Phi^{\mathrm{MLP}}}(x)$ is smaller than $k$. For such points we adopt the convention $\mathrm{PR}^{(k)}(x) = \sqrt{N_k}$, which corresponds to the absence of concentration and is consistent with the convention that will be adopted for the $\alpha$-RC metrics in Definition~\ref{def: renyi_concentration_metric}.

To illustrate this metric in practice, the following example demonstrates how the Participation Ratio is computed for two contrasting scenarios.

\begin{example}[Calculation of the Participation Ratio] \label{ex: PR}
Consider two cases with $N_k = 4$ minors of order $k$ evaluated at a point $x$,
whose absolute values are
\begin{equation*}
|J_A^{(k)}(x)| = \begin{pmatrix} 1 & 1 \\ 1 & 1 \end{pmatrix}, \qquad
|J_B^{(k)}(x)| = \begin{pmatrix} 4 & 0 \\ 0 & 0 \end{pmatrix}.
\end{equation*}
Both matrices have the same $\ell_1$ norm,
$\|J_A^{(k)}(x)\|_1 = \|J_B^{(k)}(x)\|_1 = 4$, but their $\ell_2$ norms differ:
$\|J_A^{(k)}(x)\|_2 = 2$ and $\|J_B^{(k)}(x)\|_2 = 4$. Therefore
\begin{equation*}
\mathrm{PR}_A^{(k)}(x) = \frac{4}{2} = 2 = \sqrt{4}, \qquad
\mathrm{PR}_B^{(k)}(x) = \frac{4}{4} = 1.
\end{equation*}
In the first case the magnitude is spread uniformly across all four minors, and the
PR attains its maximum value $\sqrt{N_k} = 2$. In the second case all the magnitude
is concentrated in a single minor, and the PR attains its minimum \mbox{value $1$.}
\end{example}

\section{A novel improved metric}
\label{sec: novel metric}

In this section we propose a novel metric that circumvents the limitations of the Participation Ratio metric to assess the concentration of the matrix $J^{(k)}(x)$. 

This section is organized in two parts. Section~\ref{subs: motivating example} revisits the Participation Ratio and illustrates, through a simple example, why its dependence on the network architecture prevents comparing minor concentration across networks of different width. Section~\ref{subs: renyi metric} then introduces the proposed Rényi-based concentration metric: it first normalizes the exterior power values, recalls the Rényi divergence and its main properties, and formally defines the resulting class of $\alpha$-RC metrics. It then shows that the particular case $\alpha=2$ is directly related to the Participation Ratio, and closes with an illustrative example.

\subsection{Motivating example}\label{subs: motivating example}

As established in the previous section, the Participation Ratio $\mathrm{PR}^{(k)}(x)$ proposed by \citep{freedman2025spontaneous} possesses several appealing properties for quantifying higher-order interactions. It offers an intuitive geometric interpretation of active minors, has well-defined theoretical bounds ($1 \le \mathrm{PR}^{(k)}(x) \le \sqrt{N_k}$), and 
captures the degree of minor concentration at an individual input point $x$. 

Despite these strengths, the Participation Ratio suffers from a fundamental limitation: its range and its interpretation in the upper extreme limit are architecture-dependent. Because the theoretical upper bound $\sqrt{N_k} = \sqrt{\binom{m}{k}\binom{n}{k}}$ depends explicitly on the number of hidden units $m$, the scale of $\mathrm{PR}^{(k)}(x)$ varies dynamically with the network capacity. Consequently, the same numerical PR value can carry completely opposite geometric meanings depending on the size of the network.

To clearly illustrate this scale-dependency issue, consider the following example.

\begin{example}[Incompatibility of PR across architectural scales] \label{ex: motivating example}
Suppose we evaluate the order-$k$ minor concentration of two trained neural networks with different hidden layer widths at an input point $x$:

\begin{itemize}
    \item \textbf{Network A (Smaller architecture, $\boldsymbol{N_k = 2}$):} The matrix of order-$k$ minors is given by:
    \[
    J_A^{(k)}(x) = \begin{pmatrix} 1 & 1 \end{pmatrix},
    \]
    where $N_k = 2$ total minors exist. Computing the norms yields $\|J_A^{(k)}(x)\|_1 = 2$ and $\|J_A^{(k)}(x)\|_2 = \sqrt{1^2 + 1^2} = \sqrt{2}$, yielding
    \[
    \mathrm{PR}_A^{(k)}(x) = \frac{2}{\sqrt{2}} = \sqrt{2}.
    \]
    Here, activity is uniformly distributed across all possible minors ($\mathrm{PR}_A^{(k)}(x) = \sqrt{N_k}$), representing maximum dispersion (zero geometric concentration).

    \item \textbf{Network B (Larger architecture, $\boldsymbol{N_k = 16}$):} The matrix of order-$k$ minors is given by:
    \[
    J_B^{(k)}(x) = \begin{pmatrix} 
    1 & 1 & 0 & 0 \\ 
    0 & 0 & 0 & 0 \\ 
    0 & 0 & 0 & 0 \\ 
    0 & 0 & 0 & 0 
    \end{pmatrix},
    \]
    where $N_k = 16$ total minors exist. Computing the norms yields $\|J_B^{(k)}(x)\|_1 = 2$ and $\|J_B^{(k)}(x)\|_2 = \sqrt{1^2 + 1^2} = \sqrt{2}$, giving
    \[
    \mathrm{PR}_B^{(k)}(x) = \frac{2}{\sqrt{2}} = \sqrt{2}.
    \]
    Here, activity is concentrated in only 2 out of 16 minors, representing strong geometric concentration relative to the network's capacity (far below its maximum bound of $\sqrt{16} = 4$).
\end{itemize}

Despite operating under drastically different regimes, total uniform dispersion in Network A versus localized specialization in Network B, both examples yield the exact same Participation Ratio ($\mathrm{PR}^{(k)}(x) = \sqrt{2}$). Therefore, raw PR values fail to provide a meaningful basis for comparison across different scenarios.
\end{example}


The main motivation of this work is finding a concentration metric more flexible than PR with a simpler interpretation across architectures. To address both issues, we propose an alternative metric based on the \emph{Rényi entropy of order $\alpha$}. This metric is normalized so that its extreme values have the same meaning for every architecture, regardless of $N_k$. On the one hand, for every order $\alpha > 0$, the value $0$ corresponds to a uniform distribution of the magnitude across all minors, whereas the value $1$ is associated with a single non-zero minor. Concentration is thus measured relative to the maximum dispersion attainable by each network, rather than on a scale that changes with its width. It also has a parameter $\alpha$ which provides flexibility in the interpretation of such concentration. In the particular case of $\alpha=2$ this new metric is directly  expressible as a function of Freedman's original Participation Ratio.


\subsection{Rényi-based concentration metric}\label{subs: renyi metric}


In Section~\ref{subs: motivating example}, we analysed the PR as a measure of activity distribution across minors. However, as commented above, PR exhibits a fundamental quantitative limitation when evaluating minor concentration across different multilayer perceptron architectures.
In particular, the maximum value of $\mathrm{PR}^{(k)}(x)$ is given by $\sqrt{N_k}$, where $N_k = \binom{m}{k}\binom{n}{k}$ denotes the total number of minors of \mbox{order $k$.} Consequently, changing the width $m$ of a hidden layer alters the baseline scale of $\mathrm{PR}$, making direct comparisons between different network architectures impossible.

To overcome this limitation, we propose a novel concentration metric based on Rényi entropy of order $\alpha$ \citep{renyi1961measures}. This metric maps the activity distribution to a normalized concentration score bounded within the interval $[0,1]$ for any network architecture. Also, the value zero is interpreted as minimum concentration (uniform distribution) and the value one always as maximum concentration (only one non-zero element).

\subsubsection{Normalization of Exterior Power Values}

To apply an information-theoretic formulation, we first must interpret the matrix of minors $J^{(k)}(x)$ as a discrete probability distribution. This point of view is key for our goals, since the appropriate normalization of the absolute values of the entries of $J^{(k)}(x)$ can be viewed as a two-dimensional discrete \emph{Probability Mass Function} (PMF).

Given an input point $x \in \mathbb{R}^n$, let $N_k = \binom{m}{k}\binom{n}{k}$ be the total number of minors of order $k$, indexed by the multi-index sets $J \in \mathcal{J}_{k, m}$ and $I \in \mathcal{I}_{k, n}$. Normalizing the absolute values of the entries of $J^{(k)}(x)$ by its $\ell_1$-norm yields the relative weight $p_{J, I}^{(k)}(x)$ associated with the minor $J_{J, I}^{(k)}(x)$:
\begin{equation} \label{eq: probabilities}
    p_{J, I}^{(k)}(x) = \frac{\left|J_{J, I}^{(k)}(x)\right|}{\|J^{(k)}(x)\|_1} = \frac{\left|J_{J, I}^{(k)}(x)\right|}{\sum_{J' \in \mathcal{J}_{k, m}} \sum_{I' \in \mathcal{I}_{k, n}} \left|J_{J', I'}^{(k)}(x)\right|}.
\end{equation}
This collection can be interpreted as a discrete PMF $p^{(k)}(x) = \{p_{J, I}^{(k)}(x)\}_{J \in \mathcal{J}_{k, m}, I \in \mathcal{I}_{k, n}}$ over the $N_k$ entries of the exterior power. If all minors possess equal magnitude, $p^{(k)}(x)$ reduces to the uniform two-dimensional PMF with entries $1/N_k$. Conversely, if a single minor accumulates all magnitude, the two-dimensional PMF becomes degenerate.

\subsubsection{Rényi Divergence and Its Properties}

The Rényi divergence relies on the Rényi entropy. The Rényi entropy of order $\alpha\in[0,\infty)\cup\{\infty\}$ generalizes the Shannon entropy of a discrete probability distribution $p$ with $N$ entries. When  $\alpha \ge 0$ and $\alpha \neq 1$  the Rényi entropy is defined from the set of probabilities $\{p_i\}_{i = 1}^N$ as:
\begin{equation}
    H_\alpha(p) = \frac{1}{1-\alpha} \log \sum_{i=1}^N p_i^\alpha,
\end{equation}
where $p_i^0=1$ unless $p_i=0$ (in such a case $0^0=0$, by definition). Note that $H_0(p)=\log(|\textrm{supp}(p)|)$ is the Hartley entropy, with $\textrm{supp}(p)=\{i : p_i>0\}$ denoting the support of $p$.
Otherwise, when $\alpha\in\{1,\infty\}$ it is defined as the limit
\begin{equation}
    H_\alpha(p) = \lim_{\gamma\rightarrow\alpha}{H_\gamma(p)}.
\end{equation}
Specifically, $H_1(p)$ is the classical Shannon entropy, i.e. $H_1(p)=H_{\text{Sh}}(p) = -\sum_{i=1}^N p_i \log p_i$. The case $\alpha=\infty$ corresponds to the min-entropy $H_{\infty}(p)=-\log(\max_{1 \le i \le N} p_i)$ where the metric is determined by the information associated with the most likely outcome. Finally, note that the case $\alpha=2$ is widely used in practice and is known as the collision entropy. 

The parameter $\alpha$ regulates the weight assigned to outcomes of varying probabilities. The following cases are relevant for our discussion:
\begin{itemize}
    \item \textbf{Extreme choices ($\boldsymbol{\alpha = 0}$ and $\boldsymbol{\alpha = \infty}$):} For $\alpha = 0$, all non-zero entries are weighted equally regardless of magnitude. For $\alpha = \infty$, only the maximal element (that is to say, the mode of the discrete probability distribution) influences the value.
    \item \textbf{Choice of $\boldsymbol{\alpha = 1}$:} In this case we have the Shannon entropy as an intermediate choice.
    \item \textbf{Choice of $\boldsymbol{\alpha = 2}$:} The intermediate case $\alpha = 2$ yields the entropy:
    \begin{equation} \label{eq: renyi entropy of order 2}
        H_2(p) = - \log \sum_{i=1}^N p_i^2.
    \end{equation}
    The choice of $\alpha = 2$ provides a balanced weighting across the distribution while establishing a direct analytical bridge to the Participation Ratio, as we will describe later.
\end{itemize}

Let us define the Rényi divergence of order $\alpha$ (where $\alpha\in[0,\infty)\cup\{\infty\}$) of two discrete probability distributions $p$ and $q$ with $N$ outcomes and $\textrm{supp}(p)\subseteq\textrm{supp}(q)$. For $\alpha\in(0,1)\cup(1,\infty)$ the Rényi divergence is defined as \citep[Equation~(1)]{vanerven2014renyi}:
\begin{equation}
    \label{def_div_renyi}
    D_{\alpha}(p \parallel q) = \frac{1}{1-\alpha} \log \sum_{i=1}^N \frac{p_i^\alpha}{q_i^{\alpha-1}},
\end{equation}
and for the remaining cases, i.e. $\alpha\in\{0,1,\infty\}$, it is defined as the limit of \mbox{Equation~\eqref{def_div_renyi},}
\begin{equation}
    D_{\alpha}(p\parallel q)=\lim_{\gamma\rightarrow\alpha}{D_{\gamma}(p\parallel q)}.
\end{equation}
By taking limits using L'H\^{o}pital's Rule or evaluating directly, the expressions in Table~\ref{tab:renyi_casos_especiales} are obtained for these three special cases: the min-divergence (related to the Hartley entropy) for $\alpha=0$, the Kullback--Leibler divergence for $\alpha=1$, and the max-divergence for $\alpha=\infty$.

\begin{table}[H]
  \centering
  \begin{tabular}{c l l}
    \toprule
    Parameter ($\alpha$) & Expression & Name \\
    \midrule
    $\alpha = 0$ & $D_0(p \parallel q) = -\log \left( \sum_{i: p_i > 0} q_i \right)$ & Min-divergence/Hartley \\
    \addlinespace
    $\alpha = 1$ & $D_1(p \parallel q) = \sum_{i=1}^N p_i \log \left( \frac{p_i}{q_i} \right)$ & Kullback--Leibler ($D_{\text{KL}}$) \\
    \addlinespace
    $\alpha = \infty$ & $D_\infty(p \parallel q) = \log \left( \max_{1 \le i \le N} \frac{p_i}{q_i} \right)$ & Max-divergence \\
    \bottomrule
  \end{tabular}
\caption{Special cases of the Rényi divergence.}
\label{tab:renyi_casos_especiales}
\end{table}

The Rényi divergence is not a distance in the strict sense since, in general, it is neither symmetric nor satisfies the triangle inequality. However, it is nonnegative for every order $\alpha$, and for $\alpha > 0$ it also satisfies the identity of indiscernibles, that is, it vanishes if and only if both distributions coincide. For $\alpha = 0$ this property holds in a weaker form, since the divergence vanishes if and only if both distributions have the same support. These properties, established in Proposition~\ref{prop_Renyi_1}, are what we need for our objectives. 
\begin{proposition}
\label{prop_Renyi_1}
    For any $\alpha\in[0,\infty)\cup\{\infty\}$ and any discrete probability distributions $p$ and $q$ with $N$ outcomes and $\textrm{supp}(p)\subseteq\textrm{supp}(q)$, the Rényi divergence is nonnegative, i.e. $D_{\alpha}(p \parallel q) \ge 0$. Besides, for any $\alpha\in(0,\infty)\cup\{\infty\}$, $D_{\alpha}(p \parallel q) = 0$ if and only if $p = q$. The case $\alpha=0$ is special and the condition becomes $D_{0}(p \parallel q) = 0$ if and only if $\textrm{supp}(p)=\textrm{supp}(q)$.
\end{proposition}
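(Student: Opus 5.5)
We will prove Proposition~\ref{prop_Renyi_1} by splitting the range of $\alpha$ into the cases $\alpha\in(0,1)$, $\alpha\in(1,\infty)$, and the limiting orders $\alpha\in\{0,1,\infty\}$, which we handle through the explicit expressions of Table~\ref{tab:renyi_casos_especiales}. Throughout, all sums are restricted to $S=\textrm{supp}(p)\subseteq\textrm{supp}(q)$, so every ratio $p_i/q_i$ is well defined and positive. The central tool is Jensen's inequality applied to the random variable $X=p_i/q_i$ under the distribution $p$. We write
\begin{equation*}
\sum_{i\in S}\frac{p_i^\alpha}{q_i^{\alpha-1}}=\sum_{i\in S}p_i\left(\frac{p_i}{q_i}\right)^{\alpha-1}=\mathbb{E}_p\!\left[X^{\alpha-1}\right].
\end{equation*}

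\textbf{Case $\alpha\in(0,1)\cup(1,\infty)$.} Rather than using $X$, we rewrite the sum as $\mathbb{E}_q[(p/q)^\alpha]$ over $S$, since $\sum_{i\in S} q_i (p_i/q_i)^\alpha$. For $\alpha>1$ the map $t\mapsto t^\alpha$ is strictly convex. Jensen's inequality (with $q$ restricted to $S$, padding by zero outside $S$ being harmless because $t^\alpha$ vanishes at $0$) gives
\begin{equation*}
\sum_i q_i (p_i/q_i)^\alpha \ge \Bigl(\sum_i p_i\Bigr)^\alpha = 1,
\end{equation*}
so the logarithm is nonnegative and the prefactor $1/(1-\alpha)<0$... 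This shows the sign needs care. The divergence is $\frac{1}{\alpha-1}\log\sum$ with the standard convention, and we will check that the paper's formula with $q_i^{\alpha-1}$ in the denominator agrees with it. For $\alpha\in(0,1)$, concavity gives $\sum\le 1$, and the sign again works out. Equality in the strict Jensen inequality forces $p_i/q_i$ to be constant on $\textrm{supp}(q)$ (including zeros), which yields $p=q$.

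\textbf{Limiting orders.} For $\alpha=1$, nonnegativity is Gibbs' inequality, which follows from $\log t\le t-1$; equality holds iff $p=q$. For $\alpha=\infty$, we have $\max_i p_i/q_i\ge 1$, since otherwise $\sum_i p_i<\sum_i q_i=1$; equality forces $p_i\le q_i$ for all $i$ with equal totals, hence $p=q$. For $\alpha=0$, $\sum_{i\in S}q_i\le 1$ gives $D_0\ge 0$, with equality iff $q$ puts no mass outside $S$, i.e. $\textrm{supp}(q)=\textrm{supp}(p)$. We should also confirm that the limits in Table~\ref{tab:renyi_casos_especiales} are the actual limits of the general formula, which we take as given from the paper.

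The main obstacle is the sign bookkeeping in the generic case. As printed, Equation~\eqref{def_div_renyi} has the prefactor $1/(1-\alpha)$, whereas \citep{vanerven2014renyi} uses $1/(\alpha-1)$. With $1/(1-\alpha)$ the quantity would be nonpositive. We would therefore state the proof for the standard convention $\frac{1}{\alpha-1}\log\sum_i p_i^\alpha q_i^{1-\alpha}$, flag the prefactor as a typo, and note that the Jensen argument then directly yields $D_\alpha\ge 0$ in both subranges.
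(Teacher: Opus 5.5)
Your proof is correct, but it takes a different route from the paper. The paper's proof is almost entirely by citation. The generic orders and $\alpha=\infty$ are delegated to \citep[Theorem~8]{vanerven2014renyi}, and $\alpha=1$ to \citep{cover2006elements}. Only $\alpha=0$ is argued directly, exactly as you do: $\sum_{i\in\textrm{supp}(p)}q_i\in(0,1]$, with equality iff $q$ puts no mass outside $\textrm{supp}(p)$.

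You instead give a self-contained argument. For the generic orders you apply Jensen's inequality to $t\mapsto t^\alpha$ with weights $q_i$. The mean $\sum_{i\in\textrm{supp}(q)}q_i\,(p_i/q_i)=1$ is precisely where the hypothesis $\textrm{supp}(p)\subseteq\textrm{supp}(q)$ enters, and strict convexity or concavity gives the equality case. For $\alpha=1$ you use Gibbs' inequality. For $\alpha=\infty$ you use the counting argument that $\max_i p_i/q_i<1$ would force $\sum_i p_i<1$.

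What this buys is transparency. In particular, actually carrying out the computation is what exposes the sign issue you flag, and that issue is genuine. With the prefactor $1/(1-\alpha)$ printed in Equation~\eqref{def_div_renyi}, the quantity would be $\le 0$. Table~\ref{tab:renyi_casos_especiales}, Proposition~\ref{prop_Renyi_2} and the cited reference all correspond to the standard prefactor $1/(\alpha-1)$, which the paper's citation-based proof uses only implicitly.

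When you write it up, fix that convention at the outset rather than discovering it mid-argument. Also make three small adjustments. State that Jensen is applied with the weights $q_i$ over all of $\textrm{supp}(q)$, not with $q$ restricted to $\textrm{supp}(p)$, whose weights need not sum to one. Take the maximum in the $\alpha=\infty$ case over $\textrm{supp}(q)$, to avoid $0/0$. Note that the constant forced in the Jensen equality case must equal $1$.
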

\begin{proof} 
For $\alpha\in(0,1)\cup(1,\infty)\cup\{\infty\}$, the proof is found in \citep[Theorem~8]{vanerven2014renyi}. 

For $\alpha=1$, the Rényi divergence is the Kullback--Leibler divergence and the proof is found in \citep{cover2006elements}. 

From \mbox{Table~\ref{tab:renyi_casos_especiales},} the nonnegativity of the min-divergence ($\alpha=0$) is clear since {$\textrm{supp}(p)\subseteq\textrm{supp}(q)$} and consequently $\sum_{i: p_i > 0} q_i\in (0,1]$. In this case, $\alpha=0$, using Table~\ref{tab:renyi_casos_especiales} we have that, if $\textrm{supp}(p)=\textrm{supp}(q)$, then all the probability mass of $q$ lies on the same indices as that of $p$, and therefore $\sum_{i: p_i > 0} q_i=1$, so that $D_{0}(p \parallel q)=-\log(1)=0$. On the other hand, if $D_{0}(p \parallel q) = 0$, then $\sum_{i: p_i > 0} q_i = 1$, so that $q_i = 0$ for every $i \notin \textrm{supp}(p)$, that is, $\textrm{supp}(q)\subseteq\textrm{supp}(p)$, which together with the hypothesis $\textrm{supp}(p)\subseteq\textrm{supp}(q)$ gives $\textrm{supp}(p)=\textrm{supp}(q)$.
\end{proof}

The expression of the Rényi divergence takes a simplified form when the second distribution is the uniform distribution $u=(1/N,\dots,1/N)$, which can be written in terms of the Rényi entropy, as stated in the following proposition.
\begin{proposition}
\label{prop_Renyi_2}
    For any $\alpha\in[0,\infty)\cup\{\infty\}$ and any discrete probability distribution $p$ with $N$ outcomes, let us consider the uniform distribution $u = (1/N, \dots, 1/N)$, then
    \begin{equation}
        D_{\alpha}(p \parallel u) = \log N - H_{\alpha}(p).
    \end{equation}
\end{proposition}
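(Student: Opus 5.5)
The plan is to verify the identity first for the generic orders $\alpha\in(0,1)\cup(1,\infty)$ by direct substitution, and then obtain the three special orders $\alpha\in\{0,1,\infty\}$ by passing to the limit. The uniform distribution has full support, so the hypothesis $\mathrm{supp}(p)\subseteq\mathrm{supp}(u)$ holds automatically and $D_\alpha(p\parallel u)$ is defined for every $\alpha$.

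For $\alpha\in(0,1)\cup(1,\infty)$, substituting $q_i=1/N$ into Equation~\eqref{def_div_renyi} gives $p_i^\alpha/q_i^{\alpha-1}=N^{\alpha-1}p_i^\alpha$. Hence
\begin{equation*}
D_\alpha(p\parallel u)=\frac{1}{1-\alpha}\log\Big(N^{\alpha-1}\sum_{i=1}^N p_i^\alpha\Big)=\frac{\alpha-1}{1-\alpha}\log N+\frac{1}{1-\alpha}\log\sum_{i=1}^N p_i^\alpha .
\end{equation*}
The first term equals $-\log N$ and the second is $H_\alpha(p)$. This yields $D_\alpha(p\parallel u)=H_\alpha(p)-\log N$, which has the opposite sign to the statement. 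Presumably the intended convention is the standard one, $D_\alpha(p\parallel q)=\frac{1}{\alpha-1}\log\sum_i p_i^\alpha q_i^{1-\alpha}$, as in \citep{vanerven2014renyi}, and the prefactor $1/(1-\alpha)$ in Equation~\eqref{def_div_renyi} is a sign slip. With the standard prefactor the same computation gives $\log N-H_\alpha(p)$, which also matches the table entries and the nonnegativity in Proposition~\ref{prop_Renyi_1}. I would carry out the computation with the standard prefactor and point out the discrepancy.

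For the special orders, $D_\alpha(p\parallel u)$ and $H_\alpha(p)$ are both defined as limits as $\gamma\to\alpha$. The identity holds for every $\gamma$ near $\alpha$ with $\gamma\notin\{0,1\}$, so taking limits on both sides transfers it, provided both limits exist. They do exist: for $\alpha=1$ and $\alpha=\infty$ the limits of $H_\gamma$ are stated in the paper. As a cross-check against Table~\ref{tab:renyi_casos_especiales}:
\begin{itemize}
\item at $\alpha=1$, $\sum_i p_i\log(Np_i)=\log N-H_{\text{Sh}}(p)$;
\item at $\alpha=\infty$, $\log(N\max_i p_i)=\log N-H_\infty(p)$;
\item at $\alpha=0$, $-\log(|\mathrm{supp}(p)|/N)=\log N-\log|\mathrm{supp}(p)|=\log N-H_0(p)$.
\end{itemize}
The $\alpha=0$ case is the only delicate one. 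The limit as $\gamma\to0^+$ of $\sum_i p_i^\gamma$ equals $|\mathrm{supp}(p)|$ under the convention $0^0=0$, so the limit and direct-evaluation viewpoints agree.

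The main obstacle is therefore not analytic. It is reconciling the sign convention in Equation~\eqref{def_div_renyi} with the claimed identity and with the table, and checking that the zero-probability conventions ($0^\alpha=0$, $0\log0=0$) agree across all cases.
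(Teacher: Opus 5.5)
Your proposal is correct and essentially matches the paper's proof: the generic-order substitution is exactly the content of van Erven's Equation~(2), which the paper cites; your $\alpha=0$ and $\alpha=\infty$ checks are the paper's direct evaluations from Table~\ref{tab:renyi_casos_especiales} (the paper cites Cover--Thomas for $\alpha=1$); and your limit-transfer argument is a valid, slightly more self-contained way to reach the special orders, since it does not rely on the closed forms that the table states without derivation. Your sign remark is also right and worth keeping: with the prefactor $1/(1-\alpha)$ as printed in Equation~\eqref{def_div_renyi}, one gets $D_\alpha(p\parallel u)=H_\alpha(p)-\log N\le 0$ (for instance $-\log N$ for a degenerate $p$), so the statement, Table~\ref{tab:renyi_casos_especiales} and Proposition~\ref{prop_Renyi_1} all require van Erven's prefactor $1/(\alpha-1)$, which the paper's proof uses only implicitly through the citation.
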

\begin{proof} 
As in Proposition~\ref{prop_Renyi_1}, for $\alpha\in(0,1)\cup(1,\infty)$ the proof is found in \citep[Equation~(2)]{vanerven2014renyi}, while for $\alpha=1$ it is found in \citep{cover2006elements}. 

For the case $\alpha=0$, from Table~\ref{tab:renyi_casos_especiales} we obtain 
\begin{equation}
D_0(p \parallel u) = -\log \left( \sum_{i: p_i > 0} \frac{1}{N} \right)= -\log \left( |\textrm{supp}(p)| \frac{1}{N} \right)=\log N - H_0(p). 
\end{equation}
Finally, for the case $\alpha=\infty$ and again from Table~\ref{tab:renyi_casos_especiales} we obtain
\begin{equation}
D_{\infty}(p \parallel u) = \log \left( \max_{1 \le i \le N} \frac{p_i}{1/N} \right)= \log \left(N\max_{1 \le i \le N} p_i\right)=\log N - H_{\infty}(p). 
\end{equation}
\end{proof}

With the aid of Proposition~\ref{prop_Renyi_2}, the Rényi divergence of order \mbox{$\alpha\in[0,\infty)\cup\{\infty\}$} allows us to quantify the \emph{distance} (not a metric distance) between a certain distribution $p$ and the uniform $u$.
Specifically, for a fixed $\alpha\in[0,\infty)\cup\{\infty\}$ we have two extreme situations:
\begin{enumerate}
    \item \textbf{Minimum concentration:} When $p = u$ (completely uniform distribution) then $H_{\alpha}(p) = \log N$, yielding $D_{\alpha}(p \parallel u) = 0$ (a consequence of \mbox{Proposition~\ref{prop_Renyi_1}}). For $\alpha > 0$ this is the only case with $D_{\alpha}(p \parallel u) = 0$, whereas for $\alpha = 0$ the divergence vanishes for every $p$ with $\textrm{supp}(p) = \textrm{supp}(u)$, that is, for every $p$ without zero entries.
    \item \textbf{Maximum concentration:} When $p$ is degenerate ($p_i = 1$ for some $i$ and $0$ elsewhere) it is straightforward to check that $H_{\alpha}(p) = 0$, reaching the upper bound $D_{\alpha}(p \parallel u) = \log N$.
\end{enumerate}

In what follows we will always use the Rényi divergence comparing a probability distribution $p$ with the uniform distribution $q=u$. Therefore, we will simply denote this measure by $D_\alpha$, omitting $p$ and $u$, that is, $D_\alpha \doteq D_\alpha(p \parallel u)$, where the distribution $p$ will be clear from the context.

\subsubsection{Formal Definition of the Class of $\alpha$-RC Metrics}

Let us formally define the proposed class of \emph{Rényi-based Concentration} (RC) metrics: 
\begin{definition}[Class of $\alpha$-RC metrics]
\label{def: renyi_concentration_metric}
Let us fix $\alpha\in[0,\infty)\cup\{\infty\}$. Let $x \in \mathbb{R}^n$ be an input point, and let $J^{(k)}(x)$ denote the $k$-th exterior power matrix with $N_k = \binom{m}{k}\binom{n}{k}$ total entries. The pointwise $\alpha$-RC metric of order $k$ at $x$, denoted by $C_{\alpha}^{(k)}(x)$, is defined as follows:
\begin{equation} \label{eq: renyi concentration in x}
    C_{\alpha}^{(k)}(x) \doteq  \frac{D_{\alpha}^{(k)}(x)}{\log N_k}= 1 - \frac{H_{\alpha}^{(k)}(x)}{\log N_k},
\end{equation}
where $H_{\alpha}^{(k)}(x)$ represents the Rényi entropy associated with the probability distribution $p^{(k)}(x)$ in Equation~\eqref{eq: probabilities}
and $D_{\alpha}^{(k)}(x)$ the Rényi divergence of the same distribution with respect to the uniform.
Furthermore, for an evaluation dataset $\mathcal{D} = \{x^{(1)}, \dots, x^{(S)}\}$, the dataset-level $\alpha$-RC, $C_{\alpha}^{(k)}\left(\mathcal{D}\right)$, or simply $C_{\alpha}^{(k)}$, is defined as the sample mean across $\mathcal{D}$:
\begin{equation} \label{eq: C_renyi mean}
    C_{\alpha}^{(k)} \doteq \frac{1}{S} \sum_{s=1}^S C_{\alpha}^{(k)}(x^{(s)}).
\end{equation}
By construction, $0 \le C_{\alpha}^{(k)} \le 1$. For $\alpha > 0$, $C_{\alpha}^{(k)}(x) = 0$ if and only if the minor magnitudes are uniformly distributed, whereas $C_{0}^{(k)}(x) = 0$ if and only if no minor vanishes, that is, $|\textrm{supp}(p^{(k)}(x))| = N_k$ (Proposition~\ref{prop_Renyi_1}). For every $\alpha$, $C_{\alpha}^{(k)}(x) = 1$ if and only if all the magnitude is concentrated in a single minor (by convention, $C_{\alpha}^{(k)}(x) = 0$ if $J^{(k)}(x) = 0$).
\end{definition}

Note that, from this point on, the quantities $D_\alpha$ and $H_\alpha$ depend on the order $k$ and on the input point $x$. Indeed, both are built from the probability distribution $p^{(k)}(x)$ in Equation~\eqref{eq: probabilities}, which in turn is obtained from the exterior power $J^{(k)}(x)$ of the Jacobian. This is the reason why the Rényi divergence and the Rényi entropy appear in the definition of the $\alpha$-RC metrics with the notation $D_\alpha^{(k)}(x)$ and $H_\alpha^{(k)}(x)$, respectively.

Within this class of $\alpha$-RC metrics there are four special cases to be highlighted for a fixed $x \in \mathbb{R}^n$. The $0$-RC metric is based on the min-divergence (or Hartley) and only depends on the number of non-zero entries of $J^{(k)}(x)$. At the opposite extreme is the $\infty$-RC metric, which is related to the modal mass (probability at the mode). In an intermediate situation we have the $1$-RC metric that corresponds to the KL-divergence (and the Shannon entropy), and also the $2$-RC metric that is related to the collision entropy. The $2$-RC metric is directly connected to the PR as shown below. 

The class of $\alpha$-RC metrics has a very nice behaviour with respect to the parameter $\alpha$. Specifically, all these metrics are arranged in a monotonic way with respect to $\alpha$, while their range is restricted to the interval $[0,1]$. This result is collected in the following proposition.

\begin{proposition}[Monotonicity of the class of $\alpha$-RC metrics]
\label{prop: renyi_monotonic}
Let us fix the parameter $k$ with $1\leq k\leq\min(m,n)$ and $N_k\geq 2$, where $m$ is the number of hidden neurons of the MLP defined in Section~\ref{subs: MLP and KAG} and $n$ its input dimension, and let $x \in \mathbb{R}^n$ with \mbox{$J^{(k)}(x) \neq 0$.} Let $C_{\alpha}^{(k)}(x)$ denote the pointwise $\alpha$-RC metric of order $k$ at $x$, defined in Equation~\eqref{eq: renyi concentration in x} from the probability distribution $p^{(k)}(x)$ in Equation~\eqref{eq: probabilities}. Let $\alpha_1,\alpha_2\in[0,\infty)\cup\{\infty\}$ with $\alpha_1<\alpha_2$. Then $C_{\alpha_1}^{(k)}(x)\leq C_{\alpha_2}^{(k)}(x)$, with equality if and only if $p^{(k)}(x)$ is uniform on its support.
\end{proposition}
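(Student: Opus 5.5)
Since $C_\alpha^{(k)}(x) = 1 - H_\alpha^{(k)}(x)/\log N_k$ with $\log N_k>0$ (because $N_k\ge 2$), the statement reduces to showing that the Rényi entropy $\alpha\mapsto H_\alpha(p)$ of the fixed distribution $p=p^{(k)}(x)$ is non-increasing on $[0,\infty]$. We also need to show that $H_{\alpha_1}(p)=H_{\alpha_2}(p)$ for some $\alpha_1<\alpha_2$ if and only if $p$ is uniform on its support. The hypothesis $J^{(k)}(x)\neq 0$ guarantees that $p$ is a genuine probability distribution, so its support $S=\textrm{supp}(p)$ is nonempty. Throughout, all sums may be restricted to $S$, since zero entries contribute nothing for $\alpha>0$ and are excluded by the convention $0^0=0$ for $\alpha=0$.

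For $\alpha\in(0,1)\cup(1,\infty)$ I would write $H_\alpha(p) = -\log M_{\alpha-1}$, where
\[
M_{t} = \Big(\sum_{i\in S} p_i\, p_i^{t}\Big)^{1/t}
\]
is the weighted power mean of the values $\{p_i\}_{i\in S}$ with weights $p_i$. The identity follows from $\frac{1}{1-\alpha}\log\sum_i p_i^\alpha = -\frac{1}{\alpha-1}\log\sum_i p_i\,p_i^{\alpha-1}$. The same representation covers the special orders, with $t=\alpha-1$ ranging over $[-1,\infty]$:
\begin{itemize}
\item $\alpha=0$ gives $t=-1$, the weighted harmonic mean $M_{-1}=\big(\sum_{i\in S} p_i\,p_i^{-1}\big)^{-1}=1/|S|$, which matches $H_0=\log|S|$.
\item $\alpha=1$ gives $t=0$, the weighted geometric mean $\prod_{i\in S} p_i^{p_i}$, which matches the Shannon entropy.
\item $\alpha=\infty$ gives $t=\infty$, with $M_\infty=\max_i p_i$, which matches the min-entropy.
\end{itemize}
These identifications can be checked directly from the closed forms recalled above (and in Table~\ref{tab:renyi_casos_especiales} via Proposition~\ref{prop_Renyi_2}), so no limit interchange has to be argued separately.

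The key tool is then the classical power mean inequality. For a probability weight vector and positive values, $t\mapsto M_t$ is non-decreasing on $[-\infty,\infty]$, and $M_{t_1}=M_{t_2}$ for some $t_1<t_2$ if and only if all values carrying positive weight coincide. I would cite this result (e.g.\ Hardy--Littlewood--Pólya) or sketch it via Jensen's inequality applied to $s\mapsto s^{t_2/t_1}$, or to $\log$ when an exponent is $0$, with the usual sign cases. Since $-\log$ is strictly decreasing, $H_{\alpha_1}(p)\ge H_{\alpha_2}(p)$, and hence $C^{(k)}_{\alpha_1}(x)\le C^{(k)}_{\alpha_2}(x)$. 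Equality holds if and only if all $p_i$, $i\in S$, are equal, i.e.\ $p$ is uniform on its support. Conversely, if $p_i=1/|S|$ on $S$, then every $H_\alpha$ equals $\log|S|$, confirming the equality case.

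The main obstacle is handling the strict-inequality characterization uniformly, including the endpoints $t=-1$, $0$ and $\infty$ where Jensen must be applied with a different convex function or via a limit. For $t=\infty$ in particular, I would argue directly: $M_t\le \max_i p_i$, with equality if and only if every $p_i$ on $S$ equals the maximum. I would therefore split the proof into the cases $t_1<t_2$ both finite and nonzero, one of them equal to $0$, and $t_2=\infty$, verifying the equality condition in each.
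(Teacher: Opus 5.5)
Your argument is correct, but it is organized differently from the paper's. The paper's proof is essentially a citation. It obtains the inequality from the monotonicity in $\alpha$ of the R\'enyi divergence (Theorem~3 of van Erven and Harremo\"es) together with $C_\alpha^{(k)}(x)=D_\alpha^{(k)}(x)/\log N_k$, and it takes the equality case for two orders in $(0,1)\cup(1,\infty)$ from the proof of that theorem. The special orders $0$, $1$, $\infty$ are handled by a constancy trick: if $C_{\alpha_1}^{(k)}(x)=C_{\alpha_2}^{(k)}(x)$, monotonicity forces $D_\alpha$ to be constant on $[\alpha_1,\alpha_2]$, an interval containing two generic orders, so the generic equality case applies.

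You instead work with the entropy, through the representation $H_\alpha(p)=-\log M_{\alpha-1}$, so that monotonicity and its equality case both follow from the power mean inequality for the values $p_i$ with weights $p_i$. This is essentially the mechanism inside the cited theorem, namely a Jensen argument on the ratio $p_i/u_i=Np_i$. What your version buys is that it is self-contained and treats $\alpha\in\{0,1,\infty\}$ through the explicit harmonic, geometric and maximum means instead of limits. It also makes visible why the equality condition is exactly ``uniform on its support''. The paper's route buys brevity.

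Its constancy trick would also simplify your proof. Once non-strict monotonicity of $t\mapsto M_t$ is known on $[-1,\infty]$, strictness only has to be checked for finite nonzero exponents, so the case split you describe as the main obstacle disappears. The split is unnecessary anyway if you cite the power mean inequality in its full strict form, which already covers $t=0$ and $t=\infty$. Note also that $t=-1$ is not a special exponent: the harmonic mean is covered by the generic Jensen argument with $s\mapsto s^{t_2/t_1}$.
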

\begin{proof}
The inequality follows directly from the monotonicity of the Rényi divergence \citep[Theorem~3]{vanerven2014renyi} and Equation~\eqref{eq: renyi concentration in x}. 

Regarding the equality, if $p^{(k)}(x)$ is uniform on its support, then $H_{\alpha}^{(k)}(x)=\log|\textrm{supp}(p^{(k)}(x))|$ for every $\alpha$, so $C_{\alpha}^{(k)}(x)$ does not depend on $\alpha$ and, in particular, $C_{\alpha_1}^{(k)}(x)=C_{\alpha_2}^{(k)}(x)$.

Conversely, for two orders in $(0,1)\cup(1,\infty)$, the proof of \citep[Theorem~3]{vanerven2014renyi} shows that the equality holds if and only if $p^{(k)}(x)$ is uniform on its support. If $\alpha_1$ or $\alpha_2$ is $0$, $1$ or $\infty$, the equality and the monotonicity make $D_\alpha$ constant on $[\alpha_1,\alpha_2]$, which contains two such orders, and the same conclusion follows.
\end{proof}

\subsubsection{Connection of the $2$-RC Metric with the Participation Ratio}
A key analytical property of $C_2^{(k)}(x)$ is its exact closed-form relation with $\mathrm{PR}^{(k)}(x) = \frac{\|J^{(k)}(x)\|_1}{\|J^{(k)}(x)\|_2}$. Using Equation~\eqref{eq: probabilities}, we can express the sum of squared probabilities as follows:
\begin{equation} \label{eq: sum of squared probabilities}
    \sum_{J \in \mathcal{J}_{k, m}} \sum_{I \in \mathcal{I}_{k, n}} (p_{J, I}^{(k)}(x))^2 = \sum_{J, I} \frac{\left|J_{J, I}^{(k)}(x)\right|^2}{\|J^{(k)}(x)\|_1^2} = \frac{\|J^{(k)}(x)\|_2^2}{\|J^{(k)}(x)\|_1^2} = \frac{1}{\left(\mathrm{PR}^{(k)}(x)\right)^2}.
\end{equation}
Taking the logarithm in Equation~\eqref{eq: sum of squared probabilities} yields:
\begin{equation}
    H_2^{(k)}(x) = -\log \left( \frac{1}{\left(\mathrm{PR}^{(k)}(x)\right)^2} \right) = 2 \log\left(\mathrm{PR}^{(k)}(x)\right).
\end{equation}
Substituting $H_2^{(k)}(x)$ back into $C_2^{(k)}(x)$ from Equation~\eqref{eq: renyi concentration in x} leads to:
\begin{equation} \label{eq: C2 PR}
    C_2^{(k)}(x) = 1 - \frac{2 \log\left(\mathrm{PR}^{(k)}(x)\right)}{\log N_k}.
\end{equation}
This identity demonstrates that the $2$-RC metric $C_2^{(k)}$ logarithmically scales and normalizes $\mathrm{PR}$, removing its range dependency on architecture size while preserving its intuitive geometric interpretation.

\subsubsection{Illustrative Example with the $2$-RC Metric}
To demonstrate how $C_2^{(k)}$ improves the metric interpretation with different architectures, we go back to Example~\ref{ex: motivating example} and consider two networks evaluated at an input point $x$:
\begin{itemize}
    \item \textbf{Network A ($\boldsymbol{N_k = 2}$ available minors):} The exterior power $J_A^{(k)}(x) = \begin{pmatrix} 1 & 1 \end{pmatrix}$ has a Participation Ratio $\mathrm{PR}_A^{(k)}(x) = \frac{2}{\sqrt{2}} = \sqrt{2}$. 
    
    The $2$-RC metric is:
    \begin{equation}
        C_{2, A}^{(k)}(x) = 1 - \frac{2 \log(\sqrt{2})}{\log 2} = 1 - \frac{\log 2}{\log 2} = 0.
    \end{equation}
    Activity is evenly split across all $2$ available minors, resulting in zero concentration.

    \item \textbf{Network B ($\boldsymbol{N_k = 16}$ available minors):} The exterior power $J_B^{(k)}(x)$ has $2$ non-zero entries equal to $1$ and $14$ zero entries. Here $\mathrm{PR}_B^{(k)}(x) = \frac{2}{\sqrt{2}} = \sqrt{2}$ (identical to Network A). However, the Rényi concentration is:
    \begin{equation}
        C_{2, B}^{(k)}(x) = 1 - \frac{2 \log(\sqrt{2})}{\log 16} = 1 - \frac{\log 2}{4 \log 2} = 0.75.
    \end{equation}
\end{itemize}
While $\mathrm{PR}$ assigns both networks the same value ($\sqrt{2}$), $C_2^{(k)}$ assigns $0.75$ to Network B, accurately reflecting that its activity is concentrated within a small fraction ($2/16$) of its total capacity.


\section{Computational experiments} \label{sec: experiments}

This section illustrates the behaviour of the proposed concentration metric on single-hidden-layer MLPs trained on a continuous version of the XOR problem. We first describe the experimental setup in Section~\ref{subs: experimental setup} and the predictive performance of the trained models in Section~\ref{subs: predictive performance}. We then analyse the concentration of the minors at a fixed input point for every combination of width $m$ and order $k$ (Section~\ref{subs: pointwise concentration}) and, finally, its evolution along training for the largest configuration, $m=32$ and $k=3$ (Section~\ref{subs: concentration during training}). The main findings are discussed in Section~\ref{subs: discussion}.

\subsection{Experimental setup} \label{subs: experimental setup}

We consider the regression problem studied by \citep{freedman2025spontaneous}, in
which the target is the continuous XOR function
\begin{equation} \label{eq: xor}
f^{\mathrm{xor}}(x) = \prod_{i=1}^{n} \sin(\pi x_i), \qquad x \in [-1, 1]^n,
\end{equation}
with input dimension $n = 3$. On $[-1, 1]^3$ the sign of $f^{\mathrm{xor}}(x)$ is
the product of the signs of the coordinates of $x$, so that, as in its Boolean
counterpart, the output depends jointly on all the input variables. For each random
seed, we generate a training set
$\mathcal{D}_{\mathrm{train}} = \{(x^{(s)}, y^{(s)})\}_{s=1}^{S_{\mathrm{train}}}$
with $S_{\mathrm{train}} = 4096$ points and a test set
$\mathcal{D}_{\mathrm{test}} = \{(x^{(s)}, y^{(s)})\}_{s=1}^{S_{\mathrm{test}}}$
with $S_{\mathrm{test}} = 1024$ points, independently of each other. In both sets,
each input $x^{(s)} \in [-1, 1]^3$ is drawn independently from the uniform
distribution on $[-1, 1]^3$, and the corresponding target is computed without noise
as $y^{(s)} = f^{\mathrm{xor}}(x^{(s)})$. The superscript $(s)$ therefore indexes
the points of the set under consideration, which will be clear from the context.

Following the formulation of Section~\ref{subs: MLP and KAG}, we train
single-hidden-layer MLPs $f^{\mathrm{MLP}}(x) = B\,\boldsymbol{\sigma}(Ax + a) + b$
with two hidden widths, $m \in \{8, 32\}$. Here
$\boldsymbol{\sigma} : \mathbb{R}^m \to \mathbb{R}^m$ is the multivariate activation
function, which applies the univariate activation $\sigma : \mathbb{R} \to \mathbb{R}$
to each component of its argument. As univariate activation we use the exact GeLU,
$\sigma(t) = t\,F(t)$, detailed in Section~\ref{subs: MLP and KAG}. The weight matrices are initialized as random orthogonal matrices \citep{saxe2014exact}. Specifically, $A \in \mathcal{M}_{m \times n}(\mathbb{R})$ is a random matrix with orthonormal columns and $B \in \mathcal{M}_{1 \times m}(\mathbb{R})$ is a random vector of unit norm, both obtained from the QR decomposition of a matrix with independent standard Gaussian entries. The biases $a$ and $b$ are initialized to zero. The models are trained by minimizing the mean squared error with the Adam optimizer \citep{kingma2015adam}, with learning rate $0.01$, $(\beta_1,\beta_2) = (0.9, 0.999)$ and mini-batches of $250$ points reshuffled at every epoch, for a fixed budget of $500$ epochs and without early stopping or explicit regularization. This batch size coincides with the critical batch size reported by \citep{freedman2025spontaneous} for this problem. Each width is trained with the four seeds $\{0, 1, 2, 3\}$, which determine the dataset, the initialization and the order of the mini-batches. For a given seed, both widths are trained and evaluated on exactly the same training and test points, with the same mini-batch order. All the quantities described below are measured at initialization (epoch $0$) and at the end of each of the $500$ epochs, that is, on $501$ states of each network.

Predictive performance is evaluated with the coefficient of determination, $R^2$,
\citep[Equation~(III.5)]{freedman2025spontaneous}. Given a set
$\mathcal{D} = \{(x^{(s)}, y^{(s)})\}_{s=1}^{S}$ of $S$ points, it is defined as:

\begin{equation} \label{eq: R2}
R^2 = 1 - \frac{\sum_{s=1}^{S} \left( y^{(s)} - \hat{y}^{(s)} \right)^2}
{\sum_{s=1}^{S} \left( y^{(s)} - \bar{y} \right)^2},
\end{equation}

\noindent where $\hat{y}^{(s)} = f^{\mathrm{MLP}}(x^{(s)})$ is the prediction of the model for
the input $x^{(s)}$, and $\bar{y} = \frac{1}{S} \sum_{s=1}^{S} y^{(s)}$ is the
sample mean of the targets in $\mathcal{D}$. We denote by $R^2_{\mathrm{train}}$
and $R^2_{\mathrm{test}}$ the values obtained with
$\mathcal{D} = \mathcal{D}_{\mathrm{train}}$ ($S = S_{\mathrm{train}}$) and
$\mathcal{D} = \mathcal{D}_{\mathrm{test}}$ ($S = S_{\mathrm{test}}$),
respectively, so that each of them uses the mean of the targets of its own set. A
perfect predictor yields $R^2 = 1$, the constant predictor $\hat{y}^{(s)} = \bar{y}$
yields $R^2 = 0$, and negative values correspond to predictions worse than that
constant.

The concentration of the minors is evaluated with the class of $\alpha$-RC metrics $C^{(k)}_\alpha(x)$ of Definition~\ref{def: renyi_concentration_metric}, computed from the distribution $p^{(k)}(x)$ of Equation~\eqref{eq: probabilities} and averaged over a dataset as in Equation~\eqref{eq: C_renyi mean}. We recall that $C^{(k)}_\alpha(x)$ takes values in $[0, 1]$ and is non-decreasing in $\alpha$ (Proposition~\ref{prop: renyi_monotonic}), that the case $\alpha = 1$ corresponds to the Shannon entropy and the case $\alpha = 2$ to the collision entropy, and that the $2$-RC metric is related to the Participation Ratio at every point through Equation~\eqref{eq: C2 PR}. The results obtained for $\alpha = 2$ can therefore be read as a normalized, logarithmic version of the Participation Ratio.

We evaluate four orders, $\alpha \in \{1/2, 1, 2, \infty\}$. The $\infty$-RC metric is the upper limit of the class and only depends on the largest minor. The orders $\alpha = 1$ and $\alpha = 2$ correspond to the Shannon and collision entropies, respectively, the latter linked to the Participation Ratio through Equation~\eqref{eq: C2 PR}. The order $\alpha = 1/2$ gives more weight to the small minors than $\alpha = 1$. It is the only order for which the Rényi divergence is symmetric, and it is a function of the so-called Hellinger distance between $p^{(k)}(x)$ and the uniform distribution \citep{vanerven2014renyi}. Since $C^{(k)}_\alpha(x)$ is non-decreasing in $\alpha$ (Proposition~\ref{prop: renyi_monotonic}), the values for $\alpha = 1/2$ and $\alpha = \infty$ bound, at every point, those of any order between them.

The limit $\alpha = 0$ is not evaluated, because for the MLP of Section~\ref{subs: MLP and KAG} it carries no pointwise information. By Proposition~\ref{prop_Renyi_1} and Definition~\ref{def: renyi_concentration_metric}, $C^{(k)}_0(x)$ only depends on which minors are non-zero, and it vanishes whenever no minor is zero. By Equation~\eqref{eq: minors MLP}, a minor of the MLP vanishes if and only if $\det(A_{J,I}) = 0$ or $\sigma'(z_j(x)) = 0$ for some $j \in J$. The first condition does not depend on $x$. The derivative of GeLU only vanishes at the isolated point $t^\ast \approx -0.7518$, so the second condition only holds when $x$ lies on one of the hyperplanes $z_j(x) = t^\ast$, which form a set of zero measure. Hence, the support of $p^{(k)}(x)$ is the same at almost every $x$ and is determined by $A$, and so is $C^{(k)}_0(x)$. In particular, $C^{(k)}_0(x) = 0$ at almost every $x$ as soon as no $k \times k$ minor of $A$ vanishes, as is generically the case for random and trained weights. In our experiments, no computed minor is zero at any of the evaluated points, epochs and networks, so that $C^{(k)}_0(x) = 0$ in all of them. 

All the minors of orders $k \in \{1, 2, 3\}$ are computed at every point, without
sampling. For $n = 3$, the number of minors of each order, $N_k = \binom{m}{k}\binom{n}{k}$, is $N_1 = 24$, $N_2 = 84$ and $N_3 = 56$ for $m = 8$, and $N_1 = 96$, $N_2 = 1488$ and $N_3 = 4960$ for $m = 32$. As in Equation~\eqref{eq: C_renyi mean}, the concentration is first computed at each point
and then averaged, never computed from averaged minors. The computation exploits
the factorization of the minors of the MLP Jacobian given in
Equation~\eqref{eq: minors MLP},
$\big[J^{(k)}(x)\big]_{J,I} = \big(\prod_{j \in J} \sigma'(z_j(x))\big)
\det(A_{J,I})$, so that the determinants $\det(A_{J,I})$ are computed once per
network state and reused for all the points. All the computations are carried out in double precision. The derivatives $\sigma'(z_j(x))$ and the minors are handled in logarithmic scale, so that even the smallest minors are represented without underflow and with full relative precision.

Regarding the statistics reported below, $R^2$ is summarized by its mean $\pm$ standard deviation across the four seeds. For the concentration, the four networks of a given width are evaluated on their own test sets, which gives $4 \times 1024 = 4096$ test points in total. The concentration is summarized by the mean $\pm$ standard deviation of the pointwise values $C^{(k)}_\alpha(x)$ over these $4096$ test points, so that this dispersion combines the variability between points and between seeds. Note that these $4096$ test points, obtained by pooling the test sets of the four seeds ($4 \times 1024$), should not be confused with the $4096$ training points of each dataset ($S_{\mathrm{train}}=4096$). They are different sets of points, and the two numbers coincide only numerically. The implementation code, the synthetic datasets, the weights of the trained networks at every epoch and all the computed values of $C^{(k)}_\alpha(x)$ are publicly available on GitHub at \url{https://github.com/JuanCarlosFdezMlg/renyi-concentration-kag}.

\subsection{Predictive performance} \label{subs: predictive performance}

Table~\ref{tab: R2} reports the coefficient of determination given by Equation~\eqref{eq: R2} of both architectures on the training and test sets, at initialization and at the end of training.

\begin{table}[H]
    \centering
    \small
    \begin{tabular}{c l c c}
        \toprule
        \textbf{Epoch} & \textbf{Set} & $\boldsymbol{m = 8}$ & $\boldsymbol{m = 32}$ \\
        \midrule
        \multirow{2}{*}{0}
        & Training (4096) & $-0.2561 \pm 0.1741$ & $-0.0836 \pm 0.0611$ \\
         & Test (1024) & $-0.2358 \pm 0.1581$ & $-0.1056 \pm 0.0800$ \\
        \midrule
        \multirow{2}{*}{500}
        & Training (4096) & $0.6175 \pm 0.1686$ & $0.9322 \pm 0.0480$ \\
         & Test (1024) & $0.6011 \pm 0.1632$ & $0.9274 \pm 0.0516$ \\
        \bottomrule
    \end{tabular}
    \caption{Coefficient of determination $R^2$ (mean $\pm$ standard deviation across the four seeds) of the MLPs with $m = 8$ and $m = 32$ hidden neurons. The number of points of each set is given in parentheses.}
    \label{tab: R2}
\end{table}

At initialization, both architectures yield negative values of $R^2$, that is, their predictions are worse than the constant prediction $\bar{y}$. After $500$ epochs, the network with $m = 32$ reaches $R^2_{\mathrm{test}} = 0.9274 \pm 0.0516$, whereas the network with $m = 8$ remains at $R^2_{\mathrm{test}} = 0.6011 \pm 0.1632$. At the end of training, the values obtained on the training and test sets are almost identical, with differences below $0.017$ for both widths, so there is no sign of overfitting. Since both sets lead to the same conclusions, from now on we focus exclusively on the test sets, and all the quantities reported in the remainder of this section are computed on the test points only. The narrow network is also much more sensitive to the seed. Its test values range from $0.3869$ to $0.7592$, while those of the wide network range from $0.8519$ to $0.9660$. This is consistent with the larger number of parameters of the wider network, whose greater capacity tends to make the result of training less dependent on the initialization and on the order of the mini-batches.

\subsection{Pointwise concentration at a fixed input} \label{subs: pointwise concentration}

{\renewcommand{\baselinestretch}{1}\footnotesize
\renewcommand{\arraystretch}{1.1}
\setlength{\tabcolsep}{4pt}
\begin{longtable}{c c c c c c}
    \toprule
    & & \multicolumn{2}{c}{$m = 8$} & \multicolumn{2}{c}{$m = 32$} \\
    \cmidrule(lr){3-4} \cmidrule(lr){5-6}
    $k$ & Metric & Value at $x^\ast$ & Mean $\pm$ SD & Value at $x^\ast$ & Mean $\pm$ SD \\
\endfirsthead
    \toprule
    & & \multicolumn{2}{c}{$m = 8$} & \multicolumn{2}{c}{$m = 32$} \\
    \cmidrule(lr){3-4} \cmidrule(lr){5-6}
    $k$ & Metric & Value at $x^\ast$ & Mean $\pm$ SD & Value at $x^\ast$ & Mean $\pm$ SD \\
\endhead
    \bottomrule
    \multicolumn{6}{r}{\textit{Continued on next page}} \\
\endfoot
    \bottomrule
    \caption{$\alpha$-RC metrics $C^{(k)}_\alpha$ and Participation Ratio $\mathrm{PR}^{(k)}$ at the test point $x^\ast$ for the networks of seed $0$, and their mean $\pm$ standard deviation over the $4096$ test points of the four seeds.}\label{tab: C example} \\
\endlastfoot
    \midrule
    \multicolumn{6}{l}{\textit{Epoch 0 (initialization)}} \\*
    \addlinespace[2pt]
    \multirow{5}{*}{1} & $C^{(1)}_{1/2}$ & 0.0333 & $0.0695 \pm 0.0209$ & 0.0305 & $0.0392 \pm 0.0037$ \\*
    & $C^{(1)}_{1}$ & 0.0613 & $0.1202 \pm 0.0353$ & 0.0506 & $0.0687 \pm 0.0079$ \\*
    & $C^{(1)}_{2}$ & 0.1062 & $0.1908 \pm 0.0538$ & 0.0786 & $0.1122 \pm 0.0160$ \\*
    & $C^{(1)}_{\infty}$ & 0.3083 & $0.3998 \pm 0.0708$ & 0.2304 & $0.3037 \pm 0.0453$ \\*
    & $\mathrm{PR}^{(1)}$ & 4.1385 & $3.6306 \pm 0.3008$ & 8.1882 & $7.5901 \pm 0.2748$ \\
    \cmidrule(lr){1-6}
    \multirow{5}{*}{2} & $C^{(2)}_{1/2}$ & 0.0406 & $0.0742 \pm 0.0265$ & 0.0268 & $0.0373 \pm 0.0050$ \\*
    & $C^{(2)}_{1}$ & 0.0713 & $0.1294 \pm 0.0449$ & 0.0464 & $0.0667 \pm 0.0098$ \\*
    & $C^{(2)}_{2}$ & 0.1177 & $0.2058 \pm 0.0664$ & 0.0744 & $0.1120 \pm 0.0186$ \\*
    & $C^{(2)}_{\infty}$ & 0.3155 & $0.4179 \pm 0.0755$ & 0.2035 & $0.3049 \pm 0.0389$ \\*
    & $\mathrm{PR}^{(2)}$ & 7.0609 & $5.8711 \pm 0.8301$ & 29.3957 & $25.6831 \pm 1.7130$ \\
    \cmidrule(lr){1-6}
    \multirow{5}{*}{3} & $C^{(3)}_{1/2}$ & 0.0504 & $0.0934 \pm 0.0411$ & 0.0272 & $0.0393 \pm 0.0061$ \\*
    & $C^{(3)}_{1}$ & 0.0946 & $0.1639 \pm 0.0689$ & 0.0476 & $0.0714 \pm 0.0121$ \\*
    & $C^{(3)}_{2}$ & 0.1642 & $0.2559 \pm 0.0994$ & 0.0770 & $0.1216 \pm 0.0226$ \\*
    & $C^{(3)}_{\infty}$ & 0.3937 & $0.4605 \pm 0.1076$ & 0.2199 & $0.3201 \pm 0.0392$ \\*
    & $\mathrm{PR}^{(3)}$ & 5.3777 & $4.5578 \pm 0.8580$ & 50.7623 & $42.1749 \pm 3.9628$ \\
    \midrule\midrule
    \multicolumn{6}{l}{\textit{Epoch 500 (end of training)}} \\*
    \addlinespace[2pt]
    \multirow{5}{*}{1} & $C^{(1)}_{1/2}$ & 0.0961 & $0.1401 \pm 0.0585$ & 0.2042 & $0.2478 \pm 0.0373$ \\*
    & $C^{(1)}_{1}$ & 0.1674 & $0.2195 \pm 0.0927$ & 0.2687 & $0.3163 \pm 0.0426$ \\*
    & $C^{(1)}_{2}$ & 0.2499 & $0.3018 \pm 0.1230$ & 0.3303 & $0.3752 \pm 0.0444$ \\*
    & $C^{(1)}_{\infty}$ & 0.4051 & $0.4525 \pm 0.1331$ & 0.4433 & $0.4838 \pm 0.0442$ \\*
    & $\mathrm{PR}^{(1)}$ & 3.2937 & $3.0888 \pm 0.5711$ & 4.6109 & $4.1834 \pm 0.4262$ \\
    \cmidrule(lr){1-6}
    \multirow{5}{*}{2} & $C^{(2)}_{1/2}$ & 0.1139 & $0.2062 \pm 0.0579$ & 0.2647 & $0.3318 \pm 0.0534$ \\*
    & $C^{(2)}_{1}$ & 0.1811 & $0.3105 \pm 0.0813$ & 0.3523 & $0.4352 \pm 0.0644$ \\*
    & $C^{(2)}_{2}$ & 0.2587 & $0.4084 \pm 0.0950$ & 0.4435 & $0.5204 \pm 0.0624$ \\*
    & $C^{(2)}_{\infty}$ & 0.4439 & $0.5641 \pm 0.0827$ & 0.5891 & $0.6412 \pm 0.0528$ \\*
    & $\mathrm{PR}^{(2)}$ & 5.1674 & $3.7889 \pm 0.7717$ & 7.6340 & $5.9160 \pm 1.3604$ \\
    \cmidrule(lr){1-6}
    \multirow{5}{*}{3} & $C^{(3)}_{1/2}$ & 0.1346 & $0.2619 \pm 0.0803$ & 0.3323 & $0.4198 \pm 0.0708$ \\*
    & $C^{(3)}_{1}$ & 0.2066 & $0.3695 \pm 0.0985$ & 0.4323 & $0.5426 \pm 0.0857$ \\*
    & $C^{(3)}_{2}$ & 0.2890 & $0.4673 \pm 0.1080$ & 0.5281 & $0.6384 \pm 0.0814$ \\*
    & $C^{(3)}_{\infty}$ & 0.4653 & $0.6313 \pm 0.0950$ & 0.6770 & $0.7556 \pm 0.0609$ \\*
    & $\mathrm{PR}^{(3)}$ & 4.1828 & $2.9900 \pm 0.6370$ & 7.4476 & $4.9413 \pm 1.6972$ \\
\end{longtable}
}

We first examine the concentration of the minors at an individual input point. We take as $x^\ast$ the first test point of seed $0$, $x^\ast = (0.1034, 0.1249, -0.0206)$, with target $f^{\mathrm{xor}}(x^\ast) = -0.0079$. Since the datasets are shared across widths, $x^\ast$ is the same point for $m = 8$ and $m = 32$. Table~\ref{tab: C example} reports $C^{(k)}_\alpha(x^\ast)$ for the networks of seed $0$, for all the widths, orders $k$ and values of $\alpha$, at initialization and at the end of training, together with the Participation Ratio, $\mathrm{PR}^{(k)}(x^\ast)$, of Section~\ref{subs: participation ratio}. Next to each value, the table also reports the mean $\pm$ standard deviation of the corresponding metric over the $4096$ test points of the four seeds, which places the value at $x^\ast$ within the distribution of the test set. Since the minors are computed exhaustively, $\mathrm{PR}^{(k)}$ is obtained at each point from $C^{(k)}_2$ through Equation~\eqref{eq: C2 PR}.

Several features of Table~\ref{tab: C example} are worth highlighting. First, the monotonicity of $C^{(k)}_\alpha$ with respect to $\alpha$ is observed in every block, both at $x^\ast$ and on average. For every width, order and epoch, $C^{(k)}_{1/2} \le C^{(k)}_1 \le C^{(k)}_2 \le C^{(k)}_{\infty}$. The order $\alpha$ therefore does not modify the information contained in the minors, but only the part of their distribution that is being summarized.

Second, training increases the concentration of every network, for every order $k$ and every value of $\alpha$, both at $x^\ast$ and on average, and the Participation Ratio decreases accordingly in every case. Within each network, the increase grows with the order of the minors. For $\alpha = 2$, the mean concentration of the network with $m = 32$ grows from $0.1122$ to $0.3752$ for $k = 1$, from $0.1120$ to $0.5204$ for $k = 2$ and from $0.1216$ to $0.6384$ for $k = 3$. For the network with $m = 8$, it grows from $0.1908$ to $0.3018$, from $0.2058$ to $0.4084$ and from $0.2559$ to $0.4673$, respectively.

Third, the rows of $\mathrm{PR}^{(k)}$ contain the same pointwise information as those of $C^{(k)}_2$, since both are related by the monotone transformation of Equation~\eqref{eq: C2 PR}. The difference lies in how each value is read. A value of the Participation Ratio can only be interpreted together with its upper bound $\sqrt{N_k}$, which depends on $m$ and $k$. This bound is $4.90$, $9.17$ and $7.48$ for $m = 8$, and $9.80$, $38.57$ and $70.43$ for $m = 32$, for $k = 1, 2, 3$, respectively. In contrast, $C^{(k)}_2$ directly expresses the position of the same value between the two extremes of its own range. For instance, for $m = 32$ and $k = 3$, $\mathrm{PR}^{(3)}(x^\ast)$ decreases from $50.7623$ to $7.4476$ out of a maximum of $70.43$, which corresponds to an increase of $C^{(3)}_2(x^\ast)$ from $0.0770$ to $0.5281$. Similarly, for $m = 8$ and $k = 3$, $\mathrm{PR}^{(3)}(x^\ast)$ decreases from $5.3777$ to $4.1828$ out of a maximum of $7.48$, and $C^{(3)}_2(x^\ast)$ increases from $0.1642$ to $0.2890$.

Finally, in every row of the table the point $x^\ast$ is less concentrated than the average test point, with a lower value of $C^{(k)}_\alpha$ and a higher value of $\mathrm{PR}^{(k)}$, both at initialization and at the end of training. The value at $x^\ast$ is therefore not representative of the whole test set by itself, which illustrates the pointwise heterogeneity summarized by the dataset-level metric of Equation~\eqref{eq: C_renyi mean} and by its standard deviation.

\subsection{Evolution of the concentration during training} \label{subs: concentration during training}

Although the previous analysis could be extended to every width and every order along all the epochs, from now on we restrict the study to $m = 32$ and $k = 3$ for the sake of brevity. This is the configuration with the best predictive performance and the largest number of minors, $N_3 = 4960$. For every epoch, the mean and the variance of $C^{(3)}_\alpha$ are computed over the $4096$ test points of the four seeds. Figure~\ref{fig: C epochs} shows the evolution of the mean concentration for the four values of $\alpha$, with a common vertical scale.

\begin{figure}[H]
    \centering
    \includegraphics[width=0.95\linewidth]{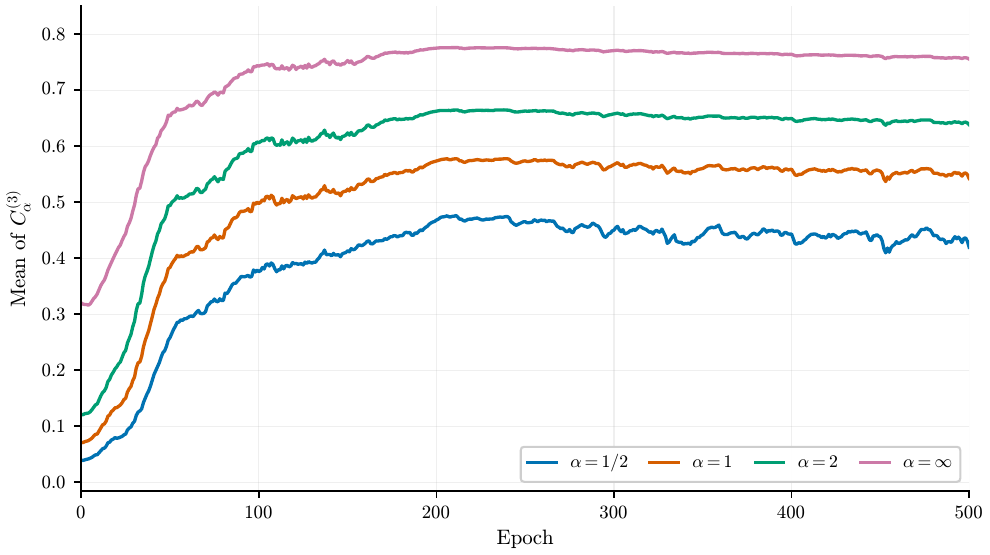}
    \caption{Mean of $C^{(3)}_\alpha$ over the test points during training for $m = 32$.}
    \label{fig: C epochs}
\end{figure}

At initialization, the mean values are $0.0393$, $0.0714$, $0.1216$ and $0.3201$ for $\alpha = 1/2, 1, 2$ and $\infty$, respectively. The concentration increases rapidly during the first $60$ epochs. For instance, the mean of $C^{(3)}_2$ exceeds $0.3$ at epoch $31$ and $0.5$ at epoch $52$. The growth then slows down until a maximum is reached between epochs $210$ and $240$, with values $0.4763$ for $\alpha = 1/2$ and $0.5778$ for $\alpha = 1$ (both at epoch $211$), $0.6649$ for $\alpha = 2$ (epoch $238$) and $0.7763$ for $\alpha = \infty$ (epoch $212$). From then on, the concentration decreases slightly, and at epoch $500$ the means are $0.4198$, $0.5426$, $0.6384$ and $0.7556$, respectively. As expected from the monotonicity in $\alpha$, the curves never cross, and the gap between $\alpha = 1$ and $\alpha = \infty$ remains of the same magnitude throughout training, from $0.25$ at initialization to $0.21$ at the end.

The maximum of the concentration roughly coincides with that of the number of hidden neurons whose input weights are very small, below $10^{-8}$ times the largest entry of $A$. On average over the four seeds, there is no such neuron before epoch $65$, their number reaches $13.25$ out of $32$ at epoch $227$, and then decreases to $5.25$ at epoch $500$. By Equation~\eqref{eq: minors MLP}, the minors that involve these neurons become negligible compared with the others, so that the magnitude concentrates on the minors of the remaining neurons.

The slight decrease of the concentration after its maximum agrees with the observations of \citep{freedman2025spontaneous}, who report an extremum of their minor concentration metrics during training near a training $R^2$ of about $0.85$, away from the point of peak performance (Figure~9 of that work). In our experiments, the maxima of $C^{(3)}_\alpha$ are reached at values of $R^2_{\mathrm{test}}$ between $0.84$ and $0.86$. As those authors point out, the cause of this extremum is not yet understood. We have checked that it is not caused by hidden neurons becoming active again. The fraction of hidden neurons that are locally inactive at the test points, with $|\sigma'(z_j(x))| < 0.05$, grows from $0.1\%$ at initialization to $11.4\%$ at epoch $500$, where it reaches its maximum.

Figure~\ref{fig: R2 magma} relates this evolution to the predictive performance of the same networks. Each panel corresponds to a value of $\alpha$ and contains one point per epoch. The vertical axis is the mean $R^2_{\mathrm{test}}$ of the four seeds, and the colour of each point is the mean of $C^{(3)}_\alpha$ at that epoch, on a magma scale fixed to $[0, 1]$ and common to the four panels. The halo around each point has an area proportional to the variance of $C^{(3)}_\alpha$ at that epoch, relative to the maximum variance reached in the panel. The legend of each panel shows the halos that correspond to one quarter, one half and the whole of that maximum variance. For instance, the maximum variance for $\alpha = 2$ is $0.0075$, so its legend shows $0.0019$, $0.0037$ and $0.0075$. The halo therefore encodes the dispersion of the concentration, not an interval for $R^2$.

\begin{figure}[H]
    \centering
    \includegraphics[width=\linewidth]{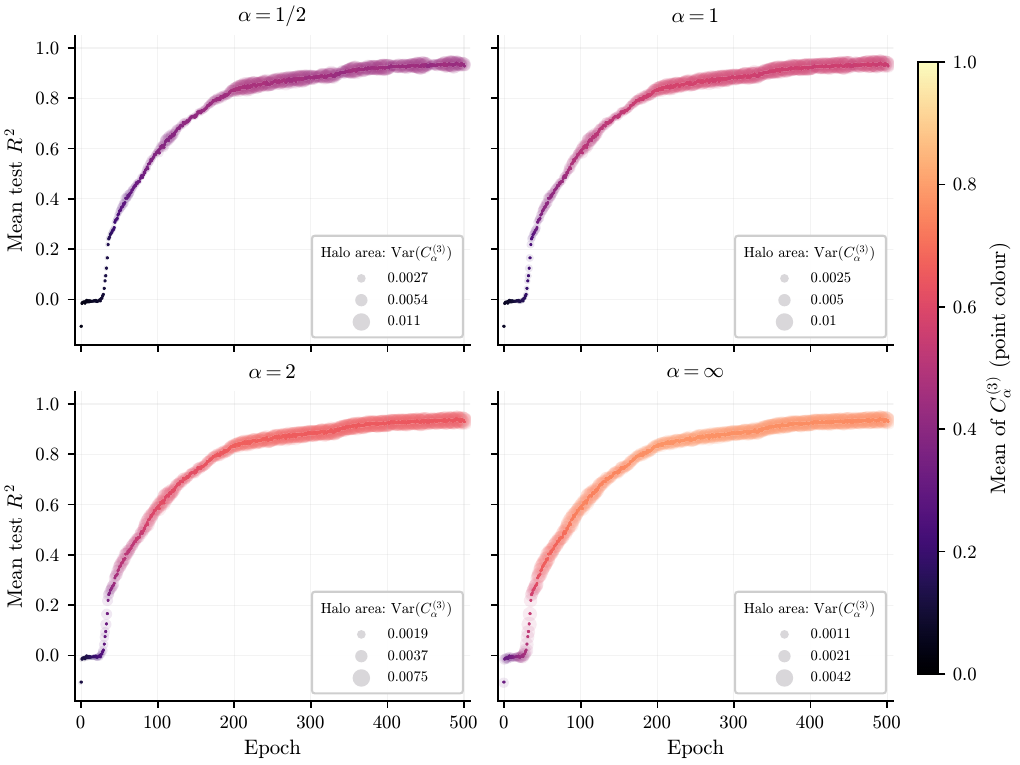}
    \caption{Joint evolution of the predictive performance and the concentration of the minors for $m = 32$ and $k = 3$ on the test sets.}
    \label{fig: R2 magma}
\end{figure}

During the first epochs, the mean $R^2_{\mathrm{test}}$ remains close to zero, with $R^2_{\mathrm{test}} = -0.0061$ at epoch $20$ and $0.0456$ at epoch $30$, and then rises sharply. It exceeds $0.25$ at epoch $38$, $0.5$ at epoch $82$, $0.75$ at epoch $158$ and $0.9$ at epoch $338$, reaching $0.9274$ at epoch $500$. The change of colour of the points, from dark to bright tones, takes place along this rise. It is worth noting that the concentration starts to grow before $R^2_{\mathrm{test}}$ departs from zero. The mean of $C^{(3)}_2$ increases from $0.1216$ at initialization to more than $0.2$ at epoch $19$, while the predictions are still no better than the constant $\bar{y}$. In the final phase, in contrast, $R^2_{\mathrm{test}}$ keeps increasing, from $0.8282$ at epoch $200$ to $0.9274$ at epoch $500$, while the concentration decreases slightly. As already observed by \citep{freedman2025spontaneous}, the relationship between both quantities is therefore not monotone along the whole training.

The halos show that the dispersion of the concentration also grows during training. For $\alpha = 2$, the standard deviation increases from $0.0226$ at initialization to $0.0814$ at epoch $500$, and the maximum variance for $\alpha = 1/2$, $\alpha = 1$ and $\alpha = 2$ is reached at epoch $486$. For $\alpha = \infty$, the maximum variance is reached earlier, at epoch $111$.

\subsection{Discussion} \label{subs: discussion}

The experiments illustrate how the proposed metric is read in practice. Each value of $C^{(k)}_\alpha$ lies in $[0, 1]$ and expresses the position of the distribution of minors between the two extremes attainable for its own number of minors, namely the uniform distribution and a single dominant minor. For the network with $m = 32$ and $k = 3$, for instance, this shows that the mean of $C^{(3)}_2$ evolves from $0.1216$, close to the uniform distribution, to $0.6384$, an intermediate level of concentration. The Participation Ratio carries the same pointwise information for $\alpha = 2$, but its values can only be interpreted together with the bound $\sqrt{N_k}$ of each configuration.

Regarding the relationship with predictive performance in the network with $m = 32$, the growth of the concentration accompanies, and slightly precedes, the onset of learning. However, the concentration is not a monotone function of the performance, since it decreases slightly in the last phase of training while $R^2_{\mathrm{test}}$ keeps improving. This agrees with the observation of \citep{freedman2025spontaneous} that the link between Kolmogorov--Arnold geometry and performance depends on the training regime.

The effect of the order $k$ can be interpreted through the factorization of Equation~\eqref{eq: minors MLP}. Each minor of order $k$ contains the product of the derivatives $\sigma'(z_j(x))$ of $k$ hidden neurons, so that when some of these derivatives become small at a point, the minors that involve the corresponding neurons are suppressed multiplicatively. This effect is amplified as $k$ grows, which is consistent with the larger increase in concentration observed for higher orders within each network. Even so, the concentration remains far from its maximum value. For $m = 32$ and $k = 3$, the mean of the $\infty$-RC metric $C^{(3)}_\infty$ never exceeds $0.7763$ and is $0.7556$ at epoch $500$, so even the most mode-sensitive member of the class stays well below $1$ and the magnitude is never carried by a single minor. At the other end of the evaluated orders, the mean of $C^{(3)}_{1/2}$, which gives more weight to the small minors, never exceeds $0.4763$ and is $0.4198$ at epoch $500$. The magnitude therefore concentrates on a subset of the minors without collapsing onto a single one, and no minor vanishes. This intermediate level of concentration is consistent with the observation of \citep{freedman2025spontaneous} that the proportion of zero rows in trained MLPs is well below that of the ideal Kolmogorov--Arnold construction.

Concerning the choice of $\alpha$, the qualitative conclusions above, namely the increase of the concentration with training and its larger increase for higher orders within each network, hold for the four evaluated orders. The order $\alpha$ determines which part of the distribution of minors is summarized. The order $\alpha = 1/2$ gives more weight to the small minors, the Shannon ($\alpha = 1$) and collision ($\alpha = 2$) entropies provide intermediate summaries, the latter directly linked to the Participation Ratio, and the limit $\alpha = \infty$ only depends on the largest minor. By monotonicity, the orders $\alpha = 1/2$ and $\alpha = \infty$ delimit, at every point, the values of any order between them. The limit $\alpha = 0$, which only depends on which minors vanish, is constant for the MLP with GeLU, as discussed in Section~\ref{subs: experimental setup}.


\section{Conclusions and Future Research}

This paper has introduced the class of $\alpha$-RC metrics to measure the concentration of the minors of the Jacobian of the inner map, a central notion of Kolmogorov--Arnold Geometry. The metrics are built from the R\'enyi divergence of order $\alpha$ between the normalized distribution of the minors and the uniform distribution, divided by its maximum value $\log N_k$. As a result, every member of the class takes values in $[0,1]$, and both extremes have the same meaning for every architecture. The value $0$ corresponds, for every $\alpha > 0$, to a uniform distribution of the minors, and the value $1$ to a single minor carrying all the magnitude. The class is non-decreasing in $\alpha$, with equality between two orders only when the minors are uniform on their support, and the order $\alpha$ selects which part of the distribution is summarized, from the small minors for $\alpha = 1/2$ to the largest minor for $\alpha = \infty$. For $\alpha = 2$ the metric is a closed-form transformation of the Participation Ratio, so it keeps the pointwise information of the Participation Ratio while removing its dependence on the number of minors.

The experiments with single-hidden-layer MLPs trained on a continuous XOR problem illustrate how the class is read in practice. Training increases the concentration for every width, every order $k$ and every value of $\alpha$, and the increase grows with $k$. For the network with $m = 32$ hidden neurons and $k = 3$, the mean of $C^{(3)}_2$ grows from $0.12$ at initialization to a maximum of $0.66$, and it starts to increase before the coefficient of determination departs from zero. The relationship with the predictive performance is not monotone, since the concentration decreases slightly in the last phase of training while the performance keeps improving, in agreement with the observations of \citep{freedman2025spontaneous}. The concentration reaches intermediate values, with the mean of $C^{(3)}_\infty$ below $0.78$ throughout training, so the magnitude concentrates on a subset of the minors without collapsing onto a single one.

The proposed family of metrics has been applied to networks in which all the minors can be enumerated. Several directions remain open. The number of minors $N_k = \binom{m}{k}\binom{n}{k}$ grows combinatorially with the width, the input dimension and the order $k$, so the exact computation of the metrics is not feasible for realistic networks, and efficient estimation methods are needed. The early growth of the concentration also suggests that the metrics could be used not only to monitor training but also to guide it, for instance by adding a concentration term to the loss function or by interleaving gradient steps on the concentration, as proposed by \citep{freedman2024proof}. 

\section*{CRediT authorship contribution statement}
\textbf{Juan Carlos Fernández:} Conceptualization, Methodology, Software, Writing – original draft, Writing – review \& editing. \textbf{Asunción Jiménez-Cordero:}  Conceptualization, Methodology, Supervision, Validation, Writing – original draft, Writing – review \& editing. \textbf{José Francisco Paris:} Conceptualization, Methodology,  Supervision, Validation, Writing – review \& editing. 

\section*{Declaration of Competing Interest}
The authors declare that they have no known competing financial interests or personal relationships that could have appeared to influence the work reported in this paper.

 \section*{Acknowledgements}
 The authors thankfully acknowledge the computer resources (Picasso Supercomputer), technical expertise and assistance provided by the SCBI (Supercomputing and Bioinformatics) center of the University of Málaga. J. F. Paris is partially supported by the project PAR 11/2023 of the University of Málaga.

\section*{Data availability}
The implementation code, the synthetic datasets, the weights of the trained networks at every epoch and all the computed values of $C^{(k)}_\alpha(x)$ are publicly available on GitHub at \url{https://github.com/JuanCarlosFdezMlg/renyi-concentration-kag}.

\section*{Declaration of generative AI and AI assisted technologies in the manuscript preparation process}
During the preparation of this work the authors used Claude (Anthropic) and Codex (OpenAI) in order to assist with the implementation of the code for the computational experiments, to draft and revise parts of the text, and to improve the language and readability of the manuscript. After using these tools, the authors reviewed and edited the content as needed and take full responsibility for the content of the publication.

\bibliography{sample}

\end{document}